\newlist{steps}{enumerate}{1}
\setlist[steps, 1]{label = Step \arabic*:}
\newcommand\T{\rule{0pt}{2.6ex}}       
\newcommand\B{\rule[-1.2ex]{0pt}{0pt}}
\theoremstyle{plain}
\newtheorem{theorem}{Theorem}[section] % reset theorem numbering for each chapter
\theoremstyle{definition}
\newtheorem{definition}{Definition}[section]
\newtheorem{case}{Case}
\newcommand{\removelatexerror}{\let\@latex@error\@gobble}
\begin{document}
%
% paper title
% Titles are generally capitalized except for words such as a, an, and, as,
% at, but, by, for, in, nor, of, on, or, the, to and up, which are usually
% not capitalized unless they are the first or last word of the title.
% Linebreaks \\ can be used within to get better formatting as desired.
% Do not put math or special symbols in the title.
\title{Measuring Node Contribution to Community Structure with Modularity Vitality}
%
%
% author names and IEEE memberships
% note positions of commas and nonbreaking spaces ( ~ ) LaTeX will not break
% a structure at a ~ so this keeps an author's name from being broken across
% two lines.
% use \thanks{} to gain access to the first footnote area
% a separate \thanks must be used for each paragraph as LaTeX2e's \thanks
% was not built to handle multiple paragraphs
%
%
%\IEEEcompsocitemizethanks is a special \thanks that produces the bulleted
% lists the Computer Society journals use for "first footnote" author
% affiliations. Use \IEEEcompsocthanksitem which works much like \item
% for each affiliation group. When not in compsoc mode,
% \IEEEcompsocitemizethanks becomes like \thanks and
% \IEEEcompsocthanksitem becomes a line break with idention. This
% facilitates dual compilation, although admittedly the differences in the
% desired content of \author between the different types of papers makes a
% one-size-fits-all approach a daunting prospect. For instance, compsoc 
% journal papers have the author affiliations above the "Manuscript
% received ..."  text while in non-compsoc journals this is reversed. Sigh.

%  \author{Thomas Magelinski, Mihovil Bartulovic, Kathleen M. Carley\\
%          Carnegie Mellon University, 5000 Forbes Avenue, Pittsburgh, PA, USA\\
%         \{tmagelin, mbartulo, carley\}@andrew.cmu.edu}

\author{Thomas Magelinski,
        Mihovil Bartulovic,~\IEEEmembership{Graduate Student Member,~IEEE},
        \\
        and~Kathleen M. Carley,~\IEEEmembership{Fellow,~IEEE}% <-this % stops a space
\IEEEcompsocitemizethanks{\IEEEcompsocthanksitem All authors are with Carnegie Mellon University, 5000 Forbes Avenue, Pittsburgh, PA, 15217 \protect\\
% note need leading \protect in front of \\ to get a newline within \thanks as
% \\ is fragile and will error, could use \hfil\break instead.
Corresponding E-mail: tmagelin@andrew.cmu.edu}% <-this % stops an unwanted space
\thanks{© 20xx IEEE. Personal use of this material is permitted. Permission
from IEEE must be obtained for all other uses, in any current or future
media, including reprinting/republishing this material for advertising or
promotional purposes, creating new collective works, for resale or
redistribution to servers or lists, or reuse of any copyrighted
component of this work in other works.}}
% \thanks{Manuscript received June 19, 2005; revised August 26, 2015.}}

% note the % following the last \IEEEmembership and also \thanks - 
% these prevent an unwanted space from occurring between the last author name
% and the end of the author line. i.e., if you had this:
% 
% \author{....lastname \thanks{...} \thanks{...} }
%                     ^------------^------------^----Do not want these spaces!
%
% a space would be appended to the last name and could cause every name on that
% line to be shifted left slightly. This is one of those "LaTeX things". For
% instance, "\textbf{A} \textbf{B}" will typeset as "A B" not "AB". To get
% "AB" then you have to do: "\textbf{A}\textbf{B}"
% \thanks is no different in this regard, so shield the last } of each \thanks
% that ends a line with a % and do not let a space in before the next \thanks.
% Spaces after \IEEEmembership other than the last one are OK (and needed) as
% you are supposed to have spaces between the names. For what it is worth,
% this is a minor point as most people would not even notice if the said evil
% space somehow managed to creep in.

% The paper headers
\markboth{This paper has been accepted by IEEE Transactions on Network Science and Engineering for publication}%
{Magelinski, Bartulovic, and Carley}
% The only time the second header will appear is for the odd numbered pages
% after the title page when using the twoside option.
% 
% *** Note that you probably will NOT want to include the author's ***
% *** name in the headers of peer review papers.                   ***
% You can use \ifCLASSOPTIONpeerreview for conditional compilation here if
% you desire.

% The publisher's ID mark at the bottom of the page is less important with
% Computer Society journal papers as those publications place the marks
% outside of the main text columns and, therefore, unlike regular IEEE
% journals, the available text space is not reduced by their presence.
% If you want to put a publisher's ID mark on the page you can do it like
% this:
%\IEEEpubid{0000--0000/00\$00.00~\copyright~2015 IEEE}
% or like this to get the Computer Society new two part style.
%\IEEEpubid{\makebox[\columnwidth]{\hfill 0000--0000/00/\$00.00~\copyright~2015 IEEE}%
%\hspace{\columnsep}\makebox[\columnwidth]{Published by the IEEE Computer Society\hfill}}
% Remember, if you use this you must call \IEEEpubidadjcol in the second
% column for its text to clear the IEEEpubid mark (Computer Society jorunal
% papers don't need this extra clearance.)

% use for special paper notices
%\IEEEspecialpapernotice{(Invited Paper)}

% for Computer Society papers, we must declare the abstract and index terms
% PRIOR to the title within the \IEEEtitleabstractindextext IEEEtran
% command as these need to go into the title area created by \maketitle.
% As a general rule, do not put math, special symbols or citations
% in the abstract or keywords.
\IEEEtitleabstractindextext{%
\begin{abstract}
Community-aware centrality is an emerging research area in network science concerned with the importance of nodes in relation to community structure.
Measures are a function of a network’s structure and a given partition.
Previous approaches extend classical centrality measures to account for community structure with little connection to community detection theory.
In contrast, we propose cluster-quality vitality measures, i.e., modularity vitality, a community-aware measure which is well-grounded in both centrality and community detection theory.
Modularity vitality quantifies positive and negative contributions to community structure, which indicate a node’s role as a community bridge or hub.
We derive a computationally efficient method of calculating modularity vitality for all nodes in $O(M+NC)$ time, where $C$ is the number of communities.
We systematically fragment networks by removing central nodes, and find that modularity vitality consistently outperforms existing community-aware centrality measures.
Modularity vitality is over 8 times more effective than the next-best method on a million-node infrastructure network.
This result does not generalize to social media communication networks, which exhibit extreme robustness to all community-aware centrality attacks.
This robustness suggests that user-based interventions to mitigate misinformation diffusion will be ineffective.
Finally, we demonstrate that modularity vitality provides a new approach to community-deception.
\end{abstract}

\begin{IEEEkeywords}
Network Centrality, Community Structure, Network Vitality, Network Robustness, Community Deception
\end{IEEEkeywords}}

% make the title area
\maketitle

% To allow for easy dual compilation without having to reenter the
% abstract/keywords data, the \IEEEtitleabstractindextext text will
% not be used in maketitle, but will appear (i.e., to be "transported")
% here as \IEEEdisplaynontitleabstractindextext when the compsoc 
% or transmag modes are not selected <OR> if conference mode is selected 
% - because all conference papers position the abstract like regular
% papers do.
\IEEEdisplaynontitleabstractindextext
% \IEEEdisplaynontitleabstractindextext has no effect when using
% compsoc or transmag under a non-conference mode.

% For peer review papers, you can put extra information on the cover
% page as needed:
% \ifCLASSOPTIONpeerreview
% \begin{center} \bfseries EDICS Category: 3-BBND \end{center}
% \fi
%
% For peerreview papers, this IEEEtran command inserts a page break and
% creates the second title. It will be ignored for other modes.
\IEEEpeerreviewmaketitle

\IEEEraisesectionheading{\section{Introduction}\label{sec:introduction}}
% Computer Society journal (but not conference!) papers do something unusual
% with the very first section heading (almost always called "Introduction").
% They place it ABOVE the main text! IEEEtran.cls does not automatically do
% this for you, but you can achieve this effect with the provided
% \IEEEraisesectionheading{} command. Note the need to keep any \label that
% is to refer to the section immediately after \section in the above as
% \IEEEraisesectionheading puts \section within a raised box.

% The very first letter is a 2 line initial drop letter followed
% by the rest of the first word in caps (small caps for compsoc).
% 
% form to use if the first word consists of a single letter:
% \IEEEPARstart{A}{demo} file is ....
% 
% form to use if you need the single drop letter followed by
% normal text (unknown if ever used by the IEEE):
% \IEEEPARstart{A}{}demo file is ....
% 
% Some journals put the first two words in caps:
% \IEEEPARstart{T}{his demo} file is ....
% 
% Here we have the typical use of a "T" for an initial drop letter
% and "HIS" in caps to complete the first word.
% \IEEEPARstart{T}{his} demo file is intended to serve as a ``starter file''
% for IEEE Computer Society journal papers produced under \LaTeX\ using
% IEEEtran.cls version 1.8b and later.
% % You must have at least 2 lines in the paragraph with the drop letter
% % (should never be an issue)

% \IEEEPARstart{N}{etwork} Science provides the tools to model complex interacting components in a generalized way.
% This generality has allowed Network Science to make impacts across many domains.
\IEEEPARstart{M}{odular} structure is a key phenomenon in the study of real-world networks.
Networks from a wide array of disciplines exhibit modular structure, meaning that nodes tend to be found in well-connected groups\cite{girvan2002community}.
Discovery of these clusters have been repeatedly shown to be meaningful within their context though empirical studies \cite{hartwell1999molecular,palla2005uncovering,krause2003compartments}.
Further, a ``No Free Lunch" theorem has been proved for community detection, stating that no algorithm can uniquely solve community detection, and implying that multiple valid community definitions can exist for a single network \cite{peel2017ground}.
% Thus, the problem of community detection has been a large part of the field of Network Science.

Another fundamental question in Network Science is that of centrality.
Put simply, how important is each node in a network?
Many centrality measures have been defined over the years, each measuring ``importance" in a different way.
Classically, centrality measures are defined to be a graph invariant. 
However, network communities have been shown to be pervasive in nature, and it has been shown that networks can have multiple meaningful definitions of communities.
So, it is natural to ask the question: how important is each node in a network \textit{given some definition of groups}?
When group structure is considered, the relative importance of nodes may change. 
For example, a fairly average node in classical terms may be a hub within a small community, boosting its importance within this context.
In this work we refer to centrality measures accounting for community structure as ``community-aware centrality measures."
The question of community-aware centrality lies at the intersection of the fundamental areas of centrality and community structure.
As such, applications to community-aware centrality are far-ranging.
Here, we show applications to immunization strategies for infectious disease, robustness testing for large infrastructure networks, and privacy-based data filtering strategies.

Most of the existing community-aware centrality measures extend classic centralities by considering within-community links and between-community links separately, before applying a weighted sum to get a single score \cite{ghalmane_centrality_2019,ghalmane2019immunization,gupta2016centrality}.
This approach acknowledges the difference between links which fall within communities and those which fall between them, but ultimately gives no insight into what role a node is playing; hub-nodes and bridge-nodes can receive similarly high values without a way to distinguish them.
Further, the weighting schemes to date have been hand-crafted, rather than derived from existing community theory, making them somewhat subjective.
Cherifi et al. have acknowledged that there is room for improvement on this front \cite{cherifi2019community}.

When discussing the modularity matrix, Newman introduced ``community-centrality," which measures a node's potential to contribute to group structure \cite{newman2006finding}.
Since the measure was of \textit{potential} contribution, community-centrality is a classical centrality-measure, independent of any defined partition.
% As previously discussed, centrality with respect to a partition is often of interest. 
% To this end, we propose measuring a node's centrality with respect to a partition as its actual contribution to that partition's structure.
To obtain a community-aware centrality from a similar line of reasoning, we propose to measure a node's actual contribution to the group structure encoded in a specific partition.
By doing so, we obtain a community-aware centrality grounded in community detection theory and free from hand-crafted weighting schemes. 

For the measure of \textit{actual} node contributions, we turn to vitalities \cite{koschutzki2005centrality}.
% A vitality is the difference between a graph index with and without a node.
In their work, Kosch{\"u}tzki et al. define vitality as the difference between the value of an arbitrary real function, $f$, applied to the graph $G$ and the same function's value when applied to the graph $G$ with the vertex of interest removed.
By doing this, a single node's contribution can be measured and the observed value can be positive or negative.
This is closely related to the key-player problem, which roughly asks to what extent a network is relying on a node's presence to remain cohesive \cite{borgatti2006identifying}. 

If the graph index is chosen to be a cluster quality metric, the vitality, then, measures a node's contribution towards group structure.
There are many such cluster quality metrics in the literature to choose from \cite{leskovec2010empirical}.
% Furthermore, if a node's contribution is towards a group structure, it can be measured with a number of cluster quality metrics \cite{leskovec2010empirical}.
Vitalities have previously only been applied to classical centrality measures, thus, they have been defined as functions that only take graphs as arguments.
Since we are interested in community-aware centralities and vitalities, we will define vitality as a function that takes a graph and its partition as arguments.

Nodes can contribute positively or negatively to community structure.
This difference is encoded in the vitality's sign, allowing us to to distinguish nodes based on their role.
Nodes which have negative cluster quality vitality are detrimental to group structure, meaning that they are connecting groups, making them a community bridge.
Similarly, positive cluster quality vitality nodes are community hubs. 

The focus of this work is on a specific cluster quality vitality - modularity vitality.
Newman's modularity is used as the objective function for many popular community detection algorithms, making it a natural choice to measure cluster quality \cite{newman2006modularity,clauset2004finding,blondel2008fast,brandes2007modularity,traag2019louvain}.
Thus, this measure has stronger grounding in community theory than those prior, with no need for a hand-crafted weighting function.
We show that manipulation of the original modularity function leads to a scalable method of calculating modularity vitality, where the calculation for all nodes scales as $O(M + NC)$ time, where $M$ is the number of links, $N$ the number of nodes, and $C$ the number of communities.

Modularity vitality was tested on generated modular networks and on two real-world networks: the Pennsylvania Road Network, and a large Twitter network collected from the discussion of the Canadian Election of 2019.
In our experiments, modularity vitality out-performs existing community-aware centralities showing potential applications for immunization strategies, control of diffusion over networks, and for robustness testing.

While other studies have demonstrated the fragility of infrastructure networks, in our first case study, we show that the road network is over 8 times more fragile than could be seen with existing community-aware centrality measures \cite{da2015fast}. 
By targeting only 1.6\% of nodes with lowest modularity vitality, the PA road network's largest component can be reduced to less than 1\% of its original size, effectively destroying the network.

In the second case-study, the social media communication network was extremely robust, as demonstrated through the ineffectiveness of all community-aware centrality attacks on the network.
The robustness of Twitter networks has serious implications for Social Cybersecurity \cite{carley2018social,NAP25335}.
One of the core areas in this emerging discipline is developing counter-measures for the mitigation of fake or misleading news on social media.
The problem of ``Fake News" has gotten more attention recently, though many basic questions in the space are left open \cite{lazer2018science}.
It is often suggested that network metrics can be used to identify points for stopping the spread of misinformation \cite{shu2019studying}.
However, our results suggest that this is not the case.
The robustness of Twitter networks suggest that even well-targeted interventions at the user level are unable to hamper the ability of information to spread.
This result is aligned with the observed phenomena that misinformation continually resurfaces on social media \cite{shin2018diffusion}.

Lastly, we show that modularity vitality can be used to perform greedy attacks to decrease modularity. 
This gives an alternate approach to the community-deception problem, which seeks to obscure communities from detection algorithms by altering network links in order to preserve privacy. 
% Lastly, we show that modularity vitality can be used to perform greedy attacks to decrease modularity, giving a scalable solution to the community-deception problem.
% Lastly, we show that modularity vitality can be used as a scalable solution to the community-deception problem, which seeks to find data filtering strategies to obscure community information of complex networks in privacy sensitive applications.
Modularity vitality was used to perform community-deception on a large twitter network.
The method decreased modularity by 41\%, however this decrease comes at the cost of 2\% of nodes and 45\% of edges.
While a removal of 2\% of nodes leads to a sizable decrease in modularity, this process has diminishing returns.
This suggests that a scalable and effective strategy for community deception is to obscure which popular accounts a user follows.
This differs from the typical strategy, which rewires edges instead of deleting them.

\section{Prior Work}
\subsection{Preliminaries}
Before describing the prior work, we begin with the notations and definitions that we will rely on for the remainder of the work.
%definitions for:
%Graph, M, N, A
\begin{definition}[Graphs] A graph is a pair $G = (V, E)$ where $V$ is a set of nodes or vertices, and $E$ of is a set of edges or links. Let us denote  $N = |V|$ as the total number of nodes and $ M = |E|$ as total the number of edges. Let $v_i \in V$ denote a node $i$ and $e_{i,j} = (v_i, v_j, w_{i,j}) \in E$ denote an edge between nodes $i$ and $j$ with weight $w_{i,j} > 0$. Finally, the adjacency matrix $A$ is is an N x N matrix with $A_{i,j} = w_{i,j}$ if $e_{i,j} \in E$ and $A_{i,j} = 0$ otherwise. For this work we only consider undirected graphs, that is $A_{i,j} = A_{j, i}.$ 
\end{definition}
%Partition, along with C, and c_i notation
\begin{definition}[Partitions] A partition of graph $G$ is $\mathfrak{C} = \{\gamma_1, \gamma_2, ..., \gamma_C\}$ where $\gamma_i$ is the set of nodes within community $i$ s.t. $\gamma_i \cap \gamma_j = \varnothing, i\neq j, \forall i,j \in \{1,\dots,C\},$ and $\gamma_1 \cup \gamma_2 \cup ...\cup \gamma_C =V$. 
We denote $C = |\mathfrak{C}|$ as the total number of communities. 
For convenience, we define a community vector, $\mathbf{c} = \left[ c_1,c_2,...,c_{N}\right]$, where $c_i$ indicates the community of node $i$. 
\end{definition}
%degree, community internal/external
\begin{definition}[Total and Community Degrees]
The total degree of a node $v_i$ is equal to the sum of its edges. Let us denote this by $k_i = \sum_j A_{i,j}.$ Next, define the community-degree of node $v_i$ as the sum of edges towards nodes belonging to community $c$. We denote this as 
\begin{equation*}
    k_i^c=\sum_{j=1}^N A_{i,j}\delta(c_j, c)
\end{equation*}
where the $\delta(a,b)$ is an indicator function s.t.  $\delta(a,b) = 1 $ if $a=b$, 0 otherwise. 
\end{definition}
\begin{definition}[Internal and External Degrees]
The internal degree of node $v_i$ is the sum of edges connected to $v_i$ within its community. That is $k_i^{\text{internal}}=k_i^{c_i}.$ The external degree of node $v_i$ is the sum of edges connected to $v_i$ and communities not equal to that of $v_i$. That is
\begin{equation*}
    k_i^{\text{external}} = \sum_{j=1}^N A_{i,j}(1-\delta(c_i, c_j)) = k_i - k_i^\text{internal}.
\end{equation*}
The number of internal links in the graph $G$ is given by $M^{\text{internal}} = \frac{1}{2}\sum_{i,j}A_{i,j}\delta(c_i,c_j).$
\end{definition}

%group-fraction, note that this is not the same as the fraction of internal/external edges, but is what is used in prior works
\begin{definition}[Group-Fraction]
Let $G$ be a graph and $\mathfrak{C}$ be a partition of the graph $G$. The group-fraction of community $c$ is given by 
\begin{equation*}
    \mu_c= \sum_{v_i\in \gamma_c} \frac{k_i^{\text{internal}}}{k_i} = \sum_{v_i\in \gamma_c} \frac{k_i^{c}}{k_i}.
\end{equation*}
Note that this is not equal to the fraction of edges within a community.
\end{definition}
%Community-aware centrality (along with F_i notation)
% \begin{definition}[Community-Aware Centralities]
% \tm{define this}
% \end{definition}
% %Community-aware vitality
% \begin{definition}[Community-Aware Vitalities]
% \tm{define this}
% \end{definition}

\subsection{Modularity and Grouping}
There is no ``best" way to evaluate cluster quality and as such, many cluster quality metrics have been defined \cite{leskovec2010empirical}.
While vitality measures on any of these cluster quality functions could be an interesting and unique contribution, we focus our work on modularity.
We have chosen modularity for several reasons.
First, some of the earliest discussions of community-aware centrality are given by Newman when exploring modularity \cite{newman2006finding}.
Next, many of the popular community detection algorithms attempt to maximize modularity.
Thus, studying the vitality of the quantity used to obtain the communities in the first place keeps measures consistent.
Lastly, we will show that modularity vitality in particular has a non-trivial vitality function which can be calculated efficiently.

The most common definition of modularity is that given by Newman, which is the fraction of the edges that fall within the given groups minus the expected fraction if edges were distributed at random \cite{newman2006modularity}.
The definition of Newman modularity is as follows.
\begin{definition} Given graph $G$ and partition $\mathfrak{C}$, let us define modularity as the fraction of the edges that fall within the given groups minus the expected fraction if edges were distributed at random \cite{newman2006modularity}. 
We can write modularity $Q$ of the graph $G$ as:
\begin{equation}
    Q(G, \mathfrak{C}) = \frac{1}{2M} \sum_{i, j} \left( A_{i, j} - \frac{k_i k_j}{2M} \right) \delta(c_i, c_j),
    \label{eqn:modularity}
\end{equation}
\end{definition}

% \begin{definition} Let  $G = (V, E)$ be a graph where $V$ is a set of vertices, and $E$ of is a set of links. Let us denote  $N = |V|$ as the total number of nodes and $ M = |E|$ as total number of links in the graph $G$. Let us define modularity as the fraction of the edges that fall within the given groups minus the expected fraction if edges were distributed at random \cite{newman2006modularity}. 
% We can write modularity $Q$ of the graph $G$ as:
% \begin{equation}
%     Q(G, \mathfrak{C}) = \frac{1}{2M} \sum_{v,w} \left( A_{v,w} - \frac{k_v k_w}{2M} \right) \delta(c_v, c_w),
%     \label{eqn:modularity}
% \end{equation}
% where $A$ the adjacency matrix, $k_v$ total degree of node $v$, $c_v$ is the community of node $v$, and $\delta(a,b)$ is the Kronecker delta function. 
% \end{definition}

Modularity in this form has been studied extensively, and the most commonly used community detection algorithms seek to maximize this quantity \cite{brandes2007modularity}.
Because it is an NP-hard problem, many different methods have been proposed to varying degrees of success \cite{blondel2008fast,clauset2004finding,traag2019louvain}.
The Louvain method has prevailed for years, and has repeatedly been shown to give meaningful communities in empirical studies \cite{blondel2008fast}.

However, recently, Traag, Waltman, van Eck have shown a flaw in the Louvain method \cite{traag2019louvain}.
Because of its update step, Louvain does not guarantee that its communities are internally connected.
It was shown that, in fact, many communities are often not connected when using the method on real-world datasets.
To fix this, Traag, Waltman, and van Eck have proposed Leiden grouping, which is slightly faster than Louvain, guarantees well-connected communities, and often achieves higher modularity. 
As such, we proceed using Leiden grouping.

\subsection{Network Centrality Measures}
Newman began the discussion of centrality based on community structure when studying the modularity matrix \cite{newman2006finding}.
He defined ``community-centrality" based on the eigenvectors of the modularity matrix.
Despite its name, this is a classical centrality measure.
Instead of measuring the actual contribution of a node, community-centrality measures a node's \textit{potential} to impact modularity.
The derivation from the modularity matrix give community-centrality a strong theoretical link to communities, but has some drawbacks.

First, potential impact can be very different from actual impact. 
% Communities may be defined in different ways, however no matter the communities, nodes all receive the same value of community centrality.
A related second point is that methods which only use graph structure are unable to adapt to different graph partitions, which is significant given that  networks can have multiple meaningful definitions of communities.
Lastly, there are some practical issues.
The modularity matrix is dense, making it memory inefficient. 
Additionally, approximations are typically needed for computation on large graphs. 

% We will refer to centrality measures which take communities into account as ``community-aware centrality measures."
% They ask: which nodes are important given these communities?
% When communities are introduced, importance takes two forms: importance within a community, and importance between communities.
% Typically, both types of importance are calculated, and a weighted sum is taken to obtain a final value.
% This makes comparison easy, but loses the \textit{type} of importance.

Masuda takes an eigenvalue approach to achieve a community-aware centrality, though not one derived from modularity \cite{masuda2009immunization}.
Instead, he builds off of the idea of dynamical importance as defined by Restrepo et al \cite{restrepo2008weighted}.
The largest eigenvalue of a graph's adjacency matrix is related to the ease of diffusion over the graph.
Based on this fact, dynamical importance orders nodes based on the change in largest eigenvalue from the node's removal.
To leverage group structure, Masuda applied this strategy to the group-to-group network, calling it the ``mod-strategy."
This method is computationally efficient since only the largest eigenvalue is needed, and because it is calculated on the group network, which is far smaller than the actual network.
Formally, nodes were ordered based on the following equation:
\begin{align}
    \text{Mas}_i &= (2\tilde{u}_{c_i} - x ) \sum_{c \neq c_i} \tilde{u}_c k_i^c 
    \\
    x &= \frac{1}{\tilde{\lambda}} \sum_{c \neq c_i}\tilde{u}_c k_i^c,
\end{align}
where $\tilde{\lambda}$ is the group network's largest eigenvalue, and $\mathbf{\tilde{u}}$ is its corresponding eigenvector.
Intuitively, the value of the eigenvector corresponds to the importance of that group.
Thus, Masuda's method gives importance to nodes based on the group it belongs to, and its connectivity to other important groups.
The more connections to important groups, the higher the score, meaning that nodes bridging communities will be ranked highly.

More recently, degree-based measures have taken favor, due to their interpretable form and their scalability.
To get at the relationship within and between communities, these measures use internal degree and external degree.

One of the earlier examples is ``commn-centrality," $CC$, proposed by Gupta et al \cite{gupta2016centrality}.
This centrality is defined as follows:
\begin{align}
\begin{split}
    CC_i = &\left(1 - \frac{\mu_{c_i}}{|\gamma_{c_i}|}\right) \frac{k_i^\text{internal}}{\max_{v_j\in \gamma_{c_i}} k_j^\text{internal}} \times R_{c_i} + \\
    &\left(1 + \frac{\mu_{c_i}}{|\gamma_{c_i}|}\right) \left(\frac{k_i^\text{external}}{\max_{v_j\in \gamma_{c_i}} k_j^\text{external}} \times R_{c_i} \right)^2 
\end{split}
\end{align}
where $R_{c_i}$ is user-defined, but is commonly chosen as $R_{c_i}= \max_{v_j\in \gamma_{c_i}} k_j^\text{internal}$.
The group fraction $\mu$ is used so that internal degree takes precedence for weak groups, and out degree takes precedence for strong groups.
One issue with commn-centrality, however, arises when a community is disconnected from the rest of the graph.
In such a case, $\max_{v_j \in \gamma_c} k_j^\text{external} = 0,$ so commn-centrality is undefined.
This commonly occurs, especially during network robustness testing, so we do not consider commn-centrality in our experiments.

Afterward, Ghalmane et al. have proposed a number of alternatives which are well defined for community components \cite{ghalmane2019immunization,ghalmane_centrality_2019}.
The simplest of which is the number of neighboring communities centrality, which just counts the number of communities in a node's immediate neighborhood; we will call it $b_i$.
Expanding on this, the community hub-bridge centrality, $\text{CHB}$ was defined as:
\begin{align}
    \text{CHB}_i = |\gamma_{c_i}|k_i^\text{internal} + b_i  k_i^\text{external}
\end{align}
where, again, $b_i$ is the number of communities neighboring node $i$. \cite{ghalmane2019immunization}.
The number of neighboring communities centrality was out-performed by the more sophisticated community-hub-bridge centrality, so we omit it from our results to preserve readability.

Generalizing this approach beyond just degree, Ghalmane et al. introduce ``modular-centrality".
They note that a graph $G$ can be decomposed into $G^\text{internal}$ and $G^\text{external}$, where only the internal or external links are retained, respectively.
Then, internal centrality can be calculated as: $\Gamma^\text{internal}(G) = \Gamma(G^\text{internal})$, where $\Gamma$ is a classical centrality measure.
The same logic can be used to obtain external centrality.
It can be seen that when $\Gamma$ is selected to be the degree, we get the same internal and external degree as we have previously defined. 
Modular centrality is a two-dimensional vector encoding internal and external centrality.
Ghalmane et al. note that there are many ways that this vector can be used to obtain a single number, as is needed for ranking tasks.
One of their proposed methods is the weighted modular centrality, $\text{WMC}$, which takes a weighted sum of the components:
\begin{align}
    \text{WMC}_i = \mu_{c_i} \Gamma_i^\text{internal} + (1-\mu_{c_i}) \Gamma_i^\text{external}
\end{align}
where $\mu_{c_i}$ is, again, the group fraction for community $c_i$.
Note that this is the opposite weighting scheme as Gupta's; when communities are strong, modular-centrality places preference to internal degrees.
We also see Masuda's weighting giving preference to bridges.
To cover the full spectrum of these previous community-aware centralities, we also consider an adjusted version of modular-centrality, $\text{AMC}$, where the weighting scheme favors bridges:
\begin{align}
    \text{AMC}_i = (1-\mu_{c_i}) \Gamma_i^\text{internal} + \mu_{c_i} \Gamma_i^\text{external}
\end{align}
Note that this has also been previously defined as  as ``Weighted Community Hub-Bridge" centrality \cite{ghalmane2019immunization}.
Due to the similarity of the measures, we will continue using the name ``Adjusted Modular Centrality."
We also note that Ghalmane's work has been extended to overlapping communities, however this work only considers non-overlapping community structure \cite{ghalmaneOverlapping}.

With the exception of Masuda's work, these methods all rely on a weighting scheme of internal and external centrality.
The weightings are not derived from network-theoretic principles, but are based on observations seen in network studies.
Ideally, a centrality would be derived from established theory, and would eliminate the need for comparison of subjective weighting.
While Masuda's measure is derived from network theory, it is based on network connectivity, rather than community detection.

For this work, we will look to vitalities, which measure a node's contribution to some global property \cite{koschutzki2005centrality}.
\begin{definition}[Network Vitality]
\label{def:vitality}
For an arbitrary real function $f : \mathcal{G} \to \mathbb{R}$ defined on graph space $\mathcal{G}$ we write the associated vitality $V_f$ as:
\begin{align*}
    V_f(G, i) = f(G) - f(G - \{i\}),
\end{align*}
for any $G(V,E) \in \mathcal{G}$ and $i \in V$. Where $G - \{i\}$ denotes the graph $G$ after the removal of node $i$.
\end{definition}

To the best of our knowledge, vitality measures of cluster-quality functions have yet to be studied.
When cluster-quality functions are considered, the graph index must also be a function of the network partition, $\mathfrak{C}.$
Here, we select $f$ to be modularity, giving modularity vitality.

\begin{definition}[Community-Aware Vitality]
Extending the Definition \ref{def:vitality} we can write the community-aware vitality as:
\begin{align*}
    V_f(G, \mathfrak{C}, i) = f(G, \mathfrak{C}) - f(G - \{i\}, \mathfrak{C} - \{i\})
\end{align*}
\end{definition}

Through manipulation of the modularity equation, we show the calculation of modularity vitality for all nodes has time complexity of $O(M + NC)$, providing the scalability of measures like commn and modular, while maintaining the theoretic link to community detection.
At the same time, our modularity-derived measure is signed.
Negative values indicate nodes are detracting from group structure, and are thus acting like community bridges.
Positive valued nodes are then more hub-like.
Thus, unlike other measures, modularity vitality shows both how central a node is \textit{and} what way the node is central.

\subsection{Evaluation: SIR Models and Network Robustness}
\label{sec:evalRob}
Evaluation of centrality measures can be subjective, since different measures may be useful for different tasks.
However, many of the prior community-aware centrality measures have been evaluated from an immunology perspective \cite{ghalmane_centrality_2019,ghalmane2019immunization,gupta2016centrality,masuda2009immunization}.
In this scenario, a disease is spreading over a network.
The centrality measure in question is used to determine which nodes are given immunity.
Then, the ``best" centrality measure is that which leads to the smallest outbreak.
The fundamental assumption is that central nodes will be spreaders, so immunizing them should result in smaller outbreaks.

Typically, the most basic epidemic model is used: the SIR model \cite{kermack1927contribution}.
In this model, each node is either susceptible, infected, or recovered.
After an initial node is infected, it infects in neighbors with probability $p$.
At the same time, the infected nodes can recover with probability $r$.
Recovered nodes are no longer susceptible, and can no longer spread the disease.
The simulation is iterated on until there are no infected nodes remaining.
The number of nodes that were ever infected is called the epidemic size.
By immunizing nodes, the epidemic size can be decreased.
It is the goal, then, to pick an immunization strategy that leads to the smallest epidemic size.

Simulations of this type are closely related to the sub-field of Network Robustness \cite{callaway2000network,scott2006network}. 
Network Robustness refers to how a network responds to attacks. Understanding how networks react with missing nodes or edges has important implications in many fields, including but not limited to biology and ecology.
Attacks typically take the form of removal of edges or removal of nodes. We will focus on removal of nodes. 

One method of evaluating an attack's effectiveness is through network fragmentation \cite{cunha_fast_2015}. 
Fragmentation $\sigma$ can be defined as the size of the remaining largest component $N_\rho$ relative to the initial size of the graph, $N$, where $\rho$ is the fraction of nodes removed. 
Fragmentation can then be given as $\sigma(\rho)=\frac{N_\rho}{N}$. 
This is a useful measure because networks often rely on connectivity to function properly.
Disconnected components in biological, communication, or power-grid networks are in serious danger of failing completely.

Now, we can see that immunization strategies are effectively network attacks.
By immunizing a node, it and its links are removed from the network.
Immunizing many nodes fragments the network, slowing diffusion.
In fact, the fragmentation, $\sigma$, is the worst-case scenario for an SIR model.
Given the right parameterization, the disease in an SIR model will infect all nodes in the component the disease initialized in.
This behavior is guaranteed with $p=1$, and $r=0$, indicating full infection with no possibility of recovery.
The same effect can be achieved with other parameters depending on how the simulated interactions play out.
If the initial node is in the largest component, the worse-case scenario is that all nodes in the largest component get infected.
Thus, $\sigma$ can be used to measure the effectiveness of an immunization strategy without the need for expensive SIR simulations.

Replacing simulated network flow with network connectivity also results in a more fair comparison between network metrics.
Centrality measures often make assumptions about how flow occurs in a network, and are thus favorable when simulated flow matches those assumptions, and less favorable when they do not \cite{borgatti2005centrality}.
Thus, a fragmentation approach does not bias the results towards centrality measures which are best aligned with the assumptions of the simulation.

From a network robustness perspective, different types of attacks have been developed.
In general, a centrality measure is calculated for each of the nodes, and the node with the highest centrality is removed, or immunized. 
Early studies looked at node attacks based on degree \cite{callaway2000network}.
Later, Holme generalized this idea along with two styles of attacks: initial and recomputed \cite{holme2002attack}.
In the initial case, centralities are calculated once and the top-k nodes are removed.
In the recomputed case, centralities are recomputed each time a node is removed.
This makes the attack more expensive to compute, but more effective.

In this framework, attacks are defined by two characteristics, the centrality measure and the style.
Common choices of centrality measure are degree and betweenness.
Betweenness has been shown to be much more damaging to a network, but is far more expensive to compute \cite{da2015fast,holme2002attack}.
The shorthand for these methods are based on the acronym of the centrality and style; IB means an attack using initial calculation of betweenness centrality, while RD is recomputed degree. 

A connection between the modular structure in networks and their robustness has been illustrated by da Cunha et al. in \cite{da2015fast}.
The authors developed a more complex attack strategy which is able to fragment real-world networks far more quickly than the simple methods previously described.
They achieve this by ensuring that nodes are attacked only when they are in the largest component and when they are connecting groups. 
This strategy is called a Module-Based-Attack, MBA.

Though effective, attacks using betweenness centrality do not scale to the size of networks commonly seen on social media.
For weighted networks, a single calculation of betweenness scales as $O(NM + N^2\log N)$, making RB scale as $O(N^2 M + N^3 \log N)$ \cite{brandes2001faster}.
This makes RB intractable for medium-sized networks, which is why da Cunha et al. use IB as the base for their attack method \cite{da2015fast}.
However, even IB is intractable for very large networks.
Additionally, the computation of largest component at every step adds to the method's complexity.
The most scalable methods are those that use an ``initial" strategy with a local measure.

Based on this, we use fragmentation to evaluate our method in comparison to the following measures:
Masuda (Mas), Community-Hub-Bridge (CHB), Modular-Centrality-Degree (WMC-D), Adjusted-Modular-Centrality-Degree (AMC-D), and Degree (Deg).
Evaluation is performed in three steps.
In the first, networks are generated to measure how different community-aware centralities perform under varying attack strategies.
In this step ``initial," ``repeated," and ``module-based" attacks are performed.
Second, the Pennsylvania road network is studied.
This is a large highly modular network, which exemplifies the power of community-aware centrality measures.
Finally, a large Twitter communication network is studied from the Canadian Elections of 2019. 
Here, the robustness of social media networks is demonstrated.
In the second and third steps, only ``initial" strategies are taken due to the size of the networks.

\subsection{Community Deception}
Community Deception has recently been formalized by Fionda and Pirro \cite{fionda2017community}.
They argue community detection is a very powerful tool, and could potentially be too powerful for privacy-sensitive applications. 
In order to protect sensitive data that is easily identifiable, community structure should be obscured. 
The goal, then, is to edit a network to prevent a specific community's detection. 
The most relevant framing they provided to the present work is Modularity-Based Deception.
In this framing, the goal is to re-wire edges such that modularity of a community is minimized. 
This approach is based on the modularity equation, similarly to the present work, and is scalable.
In a similar line of work, Chen et al. propose a genetic algorithm to perform a ``Q-Attack," which edits the network to minimize the modularity of the entire network's partition, not just that of a single community \cite{chen2019ga}. Due to the combinatorial nature of genetic algorithms this approach did not scale and was only tested on nodes with approximately 100 nodes.
% They then develop a genetic algorithm that attempts to determine which nodes or edges to hide such that the modularity is decreased the most. Genetic algorithms, however, are extremely complex and thus do not scale. Because of this, experiments were only performed on networks with approximately 100 nodes. 
% Given that it is a combinatorial optimization problem, we cannot expect a true solution. However, flipping the robustness problem around gives an accurate and cheap greedy solution. Instead of attacking nodes which will \textit{increase} modularity the most, nodes are attacked which \textit{decrease} modularity the most. We demonstrate the power of this approach by performing community deception on a social media communication network with 7.5 million nodes, and 130 million edges.

Waniek et al. also consider the single-community case \cite{waniek2018hiding}.
In this work, a modularity-inspired measure was used to determine how well a community is concealed.
The authors then randomly rewire a specified number of internal edges as external edges.
This approach demonstrated that social network users had the power to conceal their community from detection.
However, the method is non-deterministic, so its effectiveness varies depending on which edges were selected in each round of simulation.
The lack of distinction between the best edges to add or remove also makes it difficult for users to best select actions to conceal their community.

Lastly, Nagaraja takes a different view of the problem wherein an adversary is attempting to uncover the community structure of the entire network with a surveillance strategy \cite{nagaraja2010impact}.
The work proposes several counter-strategies to conceal communities with edge alterations.
Nagaraja concludes that these strategies work based on how they impact the network's modularity, without explicitly maximizing for impact on modularity.
The present enables this to be explicitly maximized.

Here, we show that Modularity Vitality can be used to efficiently perform community deception on the entire network rather than a specific community.
Rather than rewiring edges, we remove all edges attached to nodes with the highest modularity vitality.
This has the benefit of keeping maintaining network accuracy for links that are present, but ultimately does change the degree distribution. 
In a social media setting, this amounts to hiding which popular accounts a user follows, rather than re-wiring individual following relationships.
% This provides clear guidance to users wishing to conceal their community: un-friend or un-follow the popular leaders in that community. 
% However, strict adherence of this guidance will likely harm the community, not simply conceal it.
% This places a 
We demonstrate the power of this approach by performing community deception on a social media communication network with 7.5 million nodes, and 130 million edges.

\section{Calculating Modularity Vitality}
Newman's community centrality measured a node's \textit{potential} to contribute to modularity.
To calculate the \textit{actual} contribution, we can calculate the modularity vitality: the difference between the modularity of the original partition, and the modularity of the partition after the removal of a specified node.
Given that community-aware centralities are commonly evaluated using the effect of node removals, modularity vitality seems to be a natural approach.
Note that once a node is removed, the network could be re-grouped, and the group structure could potentially be quite different.
Once regrouping is considered, there is no closed-form solution to what the new modularity would be, since the maximization procedure would need to be re-run. 
Thus, regrouping is typically not considered, and we do not consider it here \cite{masuda2009immunization}.

Modularity vitality is defined as:
\begin{equation}
    V_Q(G, \mathfrak{C}, i) = Q(G, \mathfrak{C}) - Q(G - \{i\}, \mathfrak{C}- \{i\}).
    \label{eq:mod_vitality}
\end{equation}
A naive computation of this expression is quite expensive.
Modularity itself has time complexity $O(M)$.
Thus, naively recalculating this in order to calculate the modularity vitality for all nodes has complexity $O(MN)$.
However, there is an efficient way of updating modularities after the removal of a node.

The modularity after the removal of node $i$ can instead be calculated using the following expression:
\begin{align}
\begin{split}
        &Q(G - \{i\}, \mathfrak{C}- \{i\}) = \\[1em]
        &\frac{M^\text{internal}-k_i^\text{internal}}{M-k_{i}} - \frac1{4 \left ( M-k_{i}\right )^2} \sum_{\gamma_c \in \mathfrak{C}} \left ( d_c - h_{i,c}\right )^2
        \label{eqn:new_modularity}
\end{split}  \\
        &h_{i,c} = k_{i}^{c}  + k_i \delta (c,c_{i}).
\end{align}

We will now derive this equation.
% A derivation of this equation is given in Appendix \ref{sec:mv_proof}.
\label{sec:mv_proof}
\begin{theorem} 
If we remove node $i$ from the graph $G$ then the new modularity of the new graph $G-\{i\}$ can be written as:
\begin{align}
\begin{split}
        &Q(G - \{i\}, \mathfrak{C}- \{i\}) = \\[1em]
        &\frac{M^\text{internal}-k_i^\text{internal}}{M-k_{i}} - \frac1{4 \left ( M-k_{i}\right )^2} \sum_{\gamma_c \in \mathfrak{C}} \left ( d_c - h_{i,c}\right )^2
\end{split}  \\
        &h_{i,c} = k_{i}^{c}  + k_i \delta (c,c_{i}).
\end{align}
The value $h_{i, c}$ measures the number of edges a node has to that community, and if the node is a member of said community, its degree is added. 
The degree must be added because $d_c$ double-counts the number of internal links in a community.
\end{theorem}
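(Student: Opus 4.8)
The plan is to rewrite Newman's modularity \eqref{eqn:modularity} in a form built only from per-community aggregates, then to determine how each of those aggregates changes when node $i$ and its incident edges are deleted, and finally to substitute. For the first step, I would split \eqref{eqn:modularity} into its two pieces: the $A_{i,j}\delta(c_i,c_j)$ part collapses to $2M^\text{internal}$ by the definition of $M^\text{internal}$, contributing $M^\text{internal}/M$; and grouping the degree-product part by community and using the community volume $d_c=\sum_{j\in\gamma_c}k_j$ turns it into $\sum_{\gamma_c\in\mathfrak{C}} d_c^2$. This yields
\[ Q(G,\mathfrak{C}) = \frac{M^\text{internal}}{M} - \frac{1}{4M^2}\sum_{\gamma_c\in\mathfrak{C}} d_c^2, \]
which already has the shape of the claimed identity, just without the removal.

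Next I would compute, assuming no self-loops, how $M$, $M^\text{internal}$, and each volume $d_c$ transform under $G\mapsto G-\{i\}$. The first two are immediate: $M\mapsto M-k_i$, since exactly the $k_i$ edges incident to $i$ vanish; and $M^\text{internal}\mapsto M^\text{internal}-k_i^\text{internal}$, since the internal edges that are lost are precisely those incident to $i$ that remain inside $c_i$, namely $k_i^{c_i}=k_i^\text{internal}$. The delicate quantity is $d_c$. For a community $c\neq c_i$, node $i$ contributes nothing to $d_c$, yet each of the $k_i^c$ edges from $i$ into $\gamma_c$ lowers the degree of a member of $\gamma_c$, so $d_c$ drops by $k_i^c$. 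For $c=c_i$, in addition to those $k_i^{c_i}$ lost endpoints among $i$'s within-community neighbors, the whole term $k_i$ contributed by node $i$ itself disappears, so $d_{c_i}$ drops by $k_i+k_i^{c_i}$. The two cases are unified as $d_c\mapsto d_c-h_{i,c}$ with $h_{i,c}=k_i^c+k_i\delta(c,c_i)$, which is exactly the ``double-counting'' remark in the statement: since $d_c$ sums node degrees rather than edges, an internal edge of $c_i$ is registered at both of its endpoints, so deleting $i$ must subtract both $k_i$ and $k_i^{c_i}$.

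Finally, applying the rewriting of the first step to $G-\{i\}$ and inserting the three replacements gives
\[ Q(G-\{i\},\mathfrak{C}-\{i\}) = \frac{M^\text{internal}-k_i^\text{internal}}{M-k_i} - \frac{1}{4(M-k_i)^2}\sum_{\gamma_c\in\mathfrak{C}}\bigl(d_c-h_{i,c}\bigr)^2, \]
which is the claim. The main obstacle is the bookkeeping for $d_{c_i}$: one must separate the contribution of node $i$'s own degree from the degree reductions of its within-community neighbors, and keep in mind that $d_c$ counts edge-endpoints, so it sees each internal edge twice. Everything else is routine algebra, and the $O(M+NC)$ running time then follows because $M^\text{internal}$, all $k_i$ and $k_i^\text{internal}$, and the volumes $d_c$ are precomputed once in $O(M)$, while each node's vector $h_{i,\cdot}$ touches only the $C$ communities.
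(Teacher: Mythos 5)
Your proposal is correct and follows essentially the same route as the paper's proof: rewrite modularity as $\frac{M^\text{internal}}{M} - \frac{1}{4M^2}\sum_{\gamma_c} d_c^2$, then track the updates $M \mapsto M - k_i$, $M^\text{internal} \mapsto M^\text{internal} - k_i^\text{internal}$, and $d_c \mapsto d_c - h_{i,c}$ via the same two-case analysis on whether $c = c_i$. Your bookkeeping for $d_{c_i}$ and the explanation of the double-counting of internal edge endpoints match the paper's argument exactly.
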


\begin{proof}
\noindent The removal of node $i$ from graph $G$ results in a new graph denoted by $G - \{i\}$. The same applies to the community vector, which is denoted by $\mathfrak{C}- \{i\}$.

First, we re-write Modularity as given in Equation \ref{eqn:modularity}:
\begin{align*}
    Q(G, \mathfrak{C}) =& \frac1{2M}\sum_{i,j=1}^N \left (  A_{i, j} - \frac1{2M} k_i k_j \right ) \delta(c_i,c_j)
    \\
    =& \frac1{2M}\sum_{\gamma \in \mathfrak{C}} \sum_{v_i,v_j \in \gamma} \left (  A_{i,j} - \frac1{2M} k_i k_j \right ) 
    \\
    =& \frac1{2M} \underbrace{\sum_{\gamma\in\mathfrak{C}} \sum_{v_i,v_j \in \gamma} A_{i,j}}_{2M^\text{internal}} - \frac1{4M^2}\sum_{\gamma\in\mathfrak{C}} \sum_{v_i,v_j \in \gamma} k_i k_j 
    \\
    =& \frac{M^\text{internal}}{M} - \frac1{4M^2}\sum_{\gamma\in\mathfrak{C}} \sum_{v_i,v_j \in \gamma} k_ik_j 
    \\
    =& \frac{M^\text{internal}}{M} - \frac1{4M^2} \sum_{\gamma \in \mathfrak{C}} \sum_{v_i \in \gamma} k_i \sum_{v_j \in \gamma} k_j
\end{align*}

\noindent Let
\begin{equation*}
   d_c = \sum_{v_i \in \gamma_c} k_i = \sum_{v_j \in \gamma_c}k_j
\end{equation*}

\noindent Now can express modularity in terms of number of links and total degrees of nodes:

\begin{equation}
    Q(G, \mathfrak{C}) = \frac{M^\text{internal}}M - \frac1{4M^2} \sum_{\gamma_c\in\mathfrak{C}} d_c^2
    \label{eqn:simp_mod}
\end{equation}
This form is easier to derive the new modularities from.

Equation \ref{eqn:simp_mod} can then be applied to graph on graph $G - \{i\}$ to find the new modularity:
\begin{align*}
&Q(G - \{i\}, \mathfrak{C} - \{i\}) = \\[1em]
&\frac{M^\text{internal}-k_i^\text{internal}}{M-k_{i}} - \frac1{4 \left ( M-k_{i} \right )^2} \sum_{\gamma_c\in \mathfrak{C}} \tilde{d}_{i,c}^2
\end{align*}

\noindent Now to calculate $\tilde{d}_{i,c}$ we can break this down in two cases:

\setcounter{case}{0}
\begin{case} If $c \neq c_{i}$ we have :
\end{case}
\begin{equation*}
    \tilde{d}_{i,c} = \sum_{v_j \in \gamma_c}k_j - k_i^c
\end{equation*}

\begin{case}
If $c = c_{i}$ we have :

\end{case}
\begin{equation*}
    \tilde{d}_{i,c} = \sum_{v_j \in \gamma_c}k_j - k_i^c -k_i
\end{equation*}

\noindent Let :

\begin{equation*}
    h_{i,c} = k_i^c + k_i \delta (c,c_i) 
\end{equation*}

\noindent then finally we have:
\begin{equation*}
    \tilde{d}_{i,c} = d_c -h_{i,c}
\end{equation*}
\noindent Giving us the final expression for the modularity once node $i$ is removed:
\begin{align*}
    &Q(G - \{i\}, \mathfrak{C}- \{i\}) = \\[1em]
    &\frac{M^\text{internal}-k_i^\text{internal}}{M-k_{i}} - \frac1{4 \left ( M-k_{i}\right )^2} \sum_{\gamma_c \in \mathfrak{C}} \left ( d_c - h_{i,c}\right )^2 
\end{align*}

\end{proof}

By looking at Equation \ref{eqn:new_modularity}, we observe that the only new information needed to update modularity after removing a node is contained in the node's immediate neighborhood and the vector of community degrees. 
The worst-case scenario would be to calculate updated modularity for the center node of a star-graph, which has degree $M$.
When Equation \ref{eqn:new_modularity} is used, the  time complexity of calculating the new modularity becomes $O(M + C)$.
While this seems to not be an improvement, the worst-case scenario is far worse than the average case, since most node degrees are far less than $M$.
In fact, the calculation of Equation \ref{eqn:new_modularity} for \textit{all nodes} in a network has time complexity of only $O(M + NC)$.
Given that typically $C \ll N$, this is a major improvement over the naive implementation's complexity of $O(MN)$.
This improvement allows for analysis of very large graphs for which $O(MN)$ operations could be prohibitively expensive if not infeasible.

By studying modularity vitality, rather than just the simple new modularity after a node is removed, it is easy to identify which nodes are increasing modularity and which are decreasing it.
As Newman noted, ``it is entirely possible for individual vertices to simultaneously make both large positive and negative contributions to modularity" \cite{newman2006finding}.
A simplistic approach would be to add the absolute value of the two, but Equation \ref{eq:mod_vitality} balances them to see which contribution prevails for each node.
Since nodes with positive modularity vitality are contributing positively towards community structure, they can be thought of as hubs within their community.
% Because of the difference, nodes with positive modularity vitality are contributing positively towards community structure, meaning that they are hubs within their community.
Their removal decreases the strength of their communities, thus decreasing modularity.
Conversely, nodes negatively contributing to group structure will have negative modularity vitality.
Negative contributions to group structure are facilitated through connections between groups, so nodes with highly negative modularity vitality are community bridges.
Removing these community bridges increases modularity.
A measure which does not have the issue of large positive and negative contributions balancing out is explored in Appendix \ref{sec:community_degree}, though it does not perform as well as modularity vitality.

Like many previous measures, modularity vitality is correlated with degree.
This correlation is intuitive: nodes with many connections have the most potential to impact group structure, either positively or negatively.
It can be seen in the new modularity equation: as node degree increases, the denominator decreases, leading to increase in the magnitude of modularity vitality.
However, modularity vitality is more complex, since it takes into account which groups a node is connected to.
Nodes connected to larger groups have a bigger impact than those connected to smaller groups.
This mirrors Masuda's measure, where a node's importance is based on the importance of its group and the group(s) it is connected to.
The difference here is that modularity vitality measures a group's importance with the number of internal links, while Masuda's uses the eigenvector centrality with the group to group network.

\section{Methodology}
\subsection{Fragmentation-Based Evaluation}
As discussed in Section \ref{sec:evalRob}, evaluation based on network fragmentation is similar to the SIR evaluation used in other studies, like \cite{ghalmane_centrality_2019, ghalmane2019immunization}, however is less expensive computationally and is easier to interpret.
% Previous measures have been evaluated using SIR models of diffusion \cite{ghalmane2019immunization,gupta2016centrality}.
% In these works, the high-centrality nodes are immunized, and thus removed from the network.
% Then, SIR simulations are run, and the effectiveness of the immunization is quantified by the epidemic size, the total number of nodes infected.
% While this method does measure the effectiveness of a measure, SIR models have two parameters that must be tested for, and simulations must be repeated many times to get an accurate average epidemic size.
% This results in a lot of computation, and results that depend on the diffusion parameters, making final comparison difficult.
Module-based attacks (MBA's) were tested in such a framework, where they were shown to effectively fragment networks \cite{da2015fast}.
% In a similar line of research, da Cunha et al. have shown that module-based attacks (MBA's) on networks can effectively \textit{fragment} a network.
Again, fragmentation $\sigma$ is the size of the largest component after the attack, divided by the original largest component.
Fragmentation is measured as a function of $\rho$, the fraction of nodes removed in the attack: $\sigma(\rho) = \frac{N_\rho}{N}$.
Similarly, fragmentation can be looked at as a function of the fraction of edges removed, $\eta.$
Note that here we are only targeting nodes, not edges, but the fraction of remaining edges is still an interesting quantity to study, as we see in Section \ref{sec:election}.

% Network Fragmentation provides a simulation-free method of evaluating immunization strategies.
% SIR models usually assume that the simulation begins with a single infected node.
% So, no matter what the parameters of the SIR model are, the epidemic size is upper-bounded by the size of the largest component.
% This means that fragmentation measures the worst-case epidemic size, regardless of the diffusion parameters.

An immunization or fragmentation strategy's effectiveness depends on how many nodes are removed, as seen by the notation $\sigma(\rho).$
To measure the overall effectiveness, the fragmentation function can be integrated.
The lower the integral, the more effective the strategy, so we will call this the cost function that we are trying to minimize: $C_\rho = \int_\rho \sigma(\rho) d\rho$.
For comparison, the cost with respect to edges can be of interest, though it is not directly being optimized: $C_\eta = \int_\eta \sigma(\eta)d\eta.$

Thus, we will evaluate all of the attack strategies in Section \ref{sec:strategies}, using $C$. 
We will do so in three parts: generated networks, the PA road network, and a Twitter network obtained from user to user conversations surrounding the Canadian Election of 2019. Each part highlights different aspects of the proposed method.

\subsection{Attack Strategies}
\label{sec:strategies}
Attack strategies are the rules that govern which nodes are to be immunized, or removed from the network.
Generally these strategies are independent of centrality measure, so can be paired with any measure of a researcher's choosing.
Holme outlined two strategies: initial and repeated \cite{holme2002attack}.
In the initial attack, a centrality measure is calculated for each of the nodes.
Then, the top-k nodes are selected to be attacked.
The procedure is outlined in Algorithm \ref{alg:Initial}.

Perhaps the biggest issue with the initial attack strategy is its redundancy.
After the first node is removed, the centralities of the following nodes change. However, these changes go un-detected in the initial attack model, leading to the selection of nodes that are no longer in central positions. 
This, to some extent, can happen due to random effects of node and edge removal in a network.
The extent to which random removals impact centrality values and rankings has been previously studied by Borgatti and others, who find that the accuracy of centrality measures drops off smoothly as the number of random changes to the graph increases, though this effect is dependent on the network's topology \cite{borgatti2006robustness, frantz2009robustness}.
Perhaps more importantly, there are non-random effects at play.
It is known that certain central nodes are responsible for the centrality of other nodes, and that this can be measured with exogenous centrality \cite{everett2010induced}.

The redundancy issue of the initial attack strategy is resolved in the recomputed attack strategy wherein the centralities are recomputed after each node removal.
The full algorithm is shown in  Algorithm \ref{alg:Recomputed}.
Though effective, the recompute step adds scalability issues.
For a centrality measure that takes $O(M)$ time, the attack takes $O(NM)$ time.
This means for expensive calculations like betweenness, the recompute strategy will be intractable, $O(N^3 \log N)$ for weighted networks \cite{brandes2001faster}.

A more sophisticated strategy is given by da Cuhna et al, called Module-Based-Attack (MBA) \cite{da2015fast}. 
The authors find that use of group structure leads to effective fragmentation.
Group-based structure is incorporated by only attacking nodes which bridge communities.
Further, only nodes in the current largest component are attacked.
While largest component is recomputed, the centrality measures are not.
The full procedure is given in Algorithm \ref{alg:MBA}, where $\bigoplus$ denotes append operation.
While not as complex as the recompute method, the update of the largest component and node bridges makes the method significantly more computationally expensive when compared to the simple initial attack.

\begin{figure}[ht]
  \centering
  \removelatexerror
  \begin{minipage}{.8\linewidth}
\begin{algorithm}[H]
\SetAlgoLined
\KwResult{List of removed nodes $\mathcal{L}$}
$\mathcal{L} \leftarrow \varnothing$\;
$k \leftarrow$  the number of nodes to remove\;
 $\mathcal{S} \leftarrow$ List of all nodes sorted by a centrality measure (function)\;
\While{ $|\mathcal{L}| < k$}{
 $\tau \leftarrow$ top node in S\;
   $\mathcal{L} \leftarrow \mathcal{L} \cup \tau$\;
   $\mathcal{S} \leftarrow \mathcal{S} \setminus \tau$\;
}
 \caption{Initial Attack}
   \label{alg:Initial}
\end{algorithm}
  \end{minipage}
\end{figure}

\begin{figure}[ht]
  \centering
  \removelatexerror
  \begin{minipage}{.8\linewidth}
\begin{algorithm}[H]
\SetAlgoLined
\KwResult{List of removed nodes $\mathcal{L}$}
$\mathcal{L} \leftarrow \varnothing$\;
$k \leftarrow$  the number of nodes to remove\;
$G \leftarrow$ the initial graph\;
\While{ $|\mathcal{L}| < k$}{
  $\mathcal{S} \leftarrow$ List of all nodes in $G$ sorted by a centrality measure (function)\;
  $\tau \leftarrow$ top node in S\;
  $\mathcal{L} \leftarrow \mathcal{L} \cup \tau$\;
   $G \leftarrow G \setminus \tau$\;
}
 \caption{Repeated or Recomputed Attack}
   \label{alg:Recomputed}
\end{algorithm}
  \end{minipage}
\end{figure}

\begin{figure}[ht]
  \centering
  \removelatexerror
  \begin{minipage}{.8\linewidth}
\begin{algorithm}[H]
\SetAlgoLined
\KwResult{List of removed nodes $\mathcal{L}$}
$\mathcal{L} \leftarrow \varnothing$\;
$G \leftarrow$ the initial graph\;
$B \leftarrow$  the set of nodes bridging communities in $G$\;
 $\mathcal{S} \leftarrow$ List of all nodes in $G$ sorted by a centrality measure (function)\;
$LC \leftarrow$  the set of nodes in the largest component of $G$\;
\While{ $|B \cap LC|> 0$}{
 $\tau \leftarrow$ top node in S\;
  \eIf{ $\tau \in B$ \textbf{and} $\tau$ $\in LC$}{
   $G \leftarrow$ $G \setminus \tau$\;
   $LC \leftarrow$  the set of nodes in the largest component of $G$\;
   $B \leftarrow$   the set of nodes bridging communities in $G$\;
   $\mathcal{L} \leftarrow \mathcal{L} \cup \tau$\;
   $\mathcal{S} \leftarrow \mathcal{S} \setminus \tau$\;
   
   }{
   \eIf{ $\tau \notin B$ } {
   $\mathcal{S} \leftarrow \mathcal{S} \setminus \tau$;\ 
   }{
   $\mathcal{S} \leftarrow \mathcal{S} \bigoplus \tau$  
   }
  }
}
 \caption{Module-Based Attack (MBA)}
   \label{alg:MBA}
\end{algorithm}
  \end{minipage}
\end{figure}

Thus, for small generated networks we take I, R, and MBA. 
For the PA-Road Network and Twitter networks, however, only the ``initial" attack strategy is computed, as it is the most scalable.
These are combined with degree as well as the previously discussed local community-aware centrality measures: Masuda (Mas), Community-Hub-Bridge (CHB), Modular-Centrality-Degree (WMC-D), Adjusted-Modular-Centrality-Degree (AMC-D), and Degree (Deg).
We compare these existing approaches to modularity vitality in two forms. 
First, we take the original modularity vitality (MV), attacking from negative to positive, in order to target community-bridges.
Second, we consider the absolute value of the modularity vitality (AMV), which targets nodes based on their overall contribution to group structure, positive or negative.
A third form was considered, where nodes were attacked from positive to negative modularity vitality.
This hub-first strategy did performed poorly, and is omitted from result tables to preserve readability.

\section{Network Fragmentation}

\subsection{Generated Networks}
First, we compared community-aware centralities using generated networks.
By using generated networks we can repeat tests many times.
We constructed modular networks using the cellular network model, similar to that of Masuda \cite{masuda2009immunization}.
In this model, ``cells" are random sized Erd\H{o}s-R\'enyi networks with high density, simulating clusters.
Then, the cell-to-cell network is also modeled as an Erd\H{o}s-R\'enyi network.
When two cells are linked in the group-to-group network, random nodes from each are selected and a link is drawn between them. For this study, cellular networks were created using the parameters  shown in Table \ref{table:cell_graph_prams}.
This results in an unweighted, undirected random network with community structure.

% \begin{table}[!h]
% \centering
% \resizebox{\columnwidth}{!}{
% \begin{tabular}{|c|c|l|}
% \hline
%   \textbf{Variable}   &  \textbf{Value} &  \multicolumn{1}{c|}{\textbf{Description}} \\ \hline \hline
% $N$   & 1000                                           &  Number of nodes \T\B \\ \hline 
% $N_c$ & $\mathcal{U}(10, 20)$                          &  Number of cells \T\B \\ \hline
% $n_i$ & $\mathcal{N}(\mu=N /N_c, \sigma^2 = N_c / 5),$ &  Number of nodes per cell \T\B \\ \hline
% $p_i$ & $\mathcal{U}(0.1, 0.25)$                       & Density of internal cell relationships \T\B \\ \hline
% $p_o$ & $\mathcal{U}(0, 0.5)$                          & Density of the cell-to-cell network \T\B \\ \hline
% \end{tabular}}
% \caption{Cellular Network Parameters. $\mathcal{U}(a,b)$ denotes the uniform random distribution between numbers $a$ and $b$; $\mathcal{N}(\mu,\sigma^2)$ denotes the normal distribution with mean $\mu$ and variance $\sigma^2$.  }
% \label{table:cell_graph_prams}
% \end{table}

\begin{table}[!ht]
\caption{Cellular Network Parameters. $\mathcal{U}(a,b)$ denotes the uniform random distribution between numbers $a$ and $b$; $\mathcal{N}(\mu,\sigma^2)$ denotes the normal distribution with mean $\mu$ and variance $\sigma^2$.  }
\label{table:cell_graph_prams}
\centering
\resizebox{\columnwidth}{!}{
\begin{tabular}{|l|l|}
\hline
  \textbf{Variable}   &  \multicolumn{1}{c|}{\textbf{Description}} \\ \hline \hline
$N =    1000$                                           &  Number of nodes \T\B \\ \hline 
$N_c = \mathcal{U}(10, 20)$                          &  Number of cells \T\B \\ \hline
$n_i = \mathcal{N}(N /N_c, N_c / 5),$ &  Number of nodes per cell \T\B \\ \hline
$p_i =\mathcal{U}(0.1, 0.25)$                       & Density of internal cell relationships \T\B \\ \hline
$p_o =\mathcal{U}(0, 0.5)$                          & Density of the cell-to-cell network \T\B \\ \hline
\end{tabular}}
\end{table}

The eight previously discussed community-aware centrality measures were paired with the three possible attack schemes, initial, recomputed, and MBA, to give 24 attacks.
Each time a network was generated all 24 attacks were performed on the network and the corresponding cost functions $C_\rho$ and $C_\eta$ were recorded.
The average cost of the 24 attacks for 100 generated networks is given in Table \ref{tab:generated_results}.
The average modularity for these 100 networks when grouped with Leiden grouping was 0.91.
The modularity vitality attack consistently outperforms all other attacks both in terms of node cost and edge cost, suggesting that it is the best community-aware centrality measure for this type of synthetic network.
The fact that attacking nodes with negative modularity vitality is more effective than nodes that are high in modularity vitality magnitude suggests that community bridge nodes are more important than community hub nodes in cellular networks.
The success of the ``adjusted" modular-degree centrality provides further evidence of this, since it places greater importance on community bridges, while the original modular-degree focuses on hubs and does not score as well.

\begin{table*}[!t]
    \caption{Results for attacks on the generated cellular networks. The values shown are the average over 100 simulations. Methods introduced in this work are on the left of the double column. The best results are emboldened.}
    \label{tab:generated_results}
    \centering
    \tiny
    \resizebox{\textwidth}{!}{%
    \begin{tabular}{|l|c|c||c|c|c|c|c|}
         \hline
         \textbf{Method} & MV & AMV & AMC-D & Mas & CHB & WMC-D & Deg  \\ 
         \hline \hline
         Initial $C_\rho$ & \textbf{0.165} & 0.211 & 0.169 & 0.198 & 0.383 & 0.381 & 0.347
         \\
         \hline
         Initial $C_\eta$ & \textbf{0.247} & 0.308 & 0.268 & 0.293 & 0.576 & 0.599 & 0.578
         \\
         \hline
         %\\
         \hline
         MBA $C_\rho$ & \textbf{0.086} & 0.087 & 0.088 & 0.090 & 0.101 & 0.103 & 0.100
         \\
         \hline
         MBA $C_\eta$ & \textbf{0.157} & 0.162 & 0.173 & 0.162 & 0.211 & 0.219 & 0.216
         \\
         \hline
         %\\
         \hline
         Recomputed $C_\rho$ & \textbf{0.107} & 0.126 & 0.130 & 0.132 & 0.331 & 0.337 & 0.309
         \\
         \hline
         Recomputed $C_\eta$ & \textbf{0.188} & 0.205 & 0.221 & 0.205 & 0.608 & 0.616 & 0.586
         \\
         \hline
    \end{tabular}
    }
\end{table*}

\subsection{PA-Road Network}
One particularly well-suited application for community-aware centrality measures is the analysis of large infrastructure networks.
These networks typically have two properties: very high modularity and low maximum degree.
High modularity makes group-based approaches appropriate.
Low maximum degree often means that simple degree-based attacks will be ineffective.
Additionally, their large size make effective approaches like MBA intractable, or at least very costly.
Instead, we show that initial-attacks with community-aware centrality measures are very effective, and that our modularity-based methods are the most effective by far.

As an example, we use the Pennsylvania Road Network \cite{pa_roads}.
Roads are represented by edges, while intersections are represented by nodes.
This network has 1,088,092 nodes, and 1,541,898 edges. When grouped with Leiden grouping maximizing modularity, 499 clusters are obtained with a modularity of 0.990.
Its maximum degree is 18.
The extremely high modularity and low maximum degree make it an ideal candidate for community-aware centrality measures.

\begin{figure*}[htb]
    \centering
    \subfloat[Fragmentation by nodes removed.]{{\includegraphics[width=0.48\textwidth]{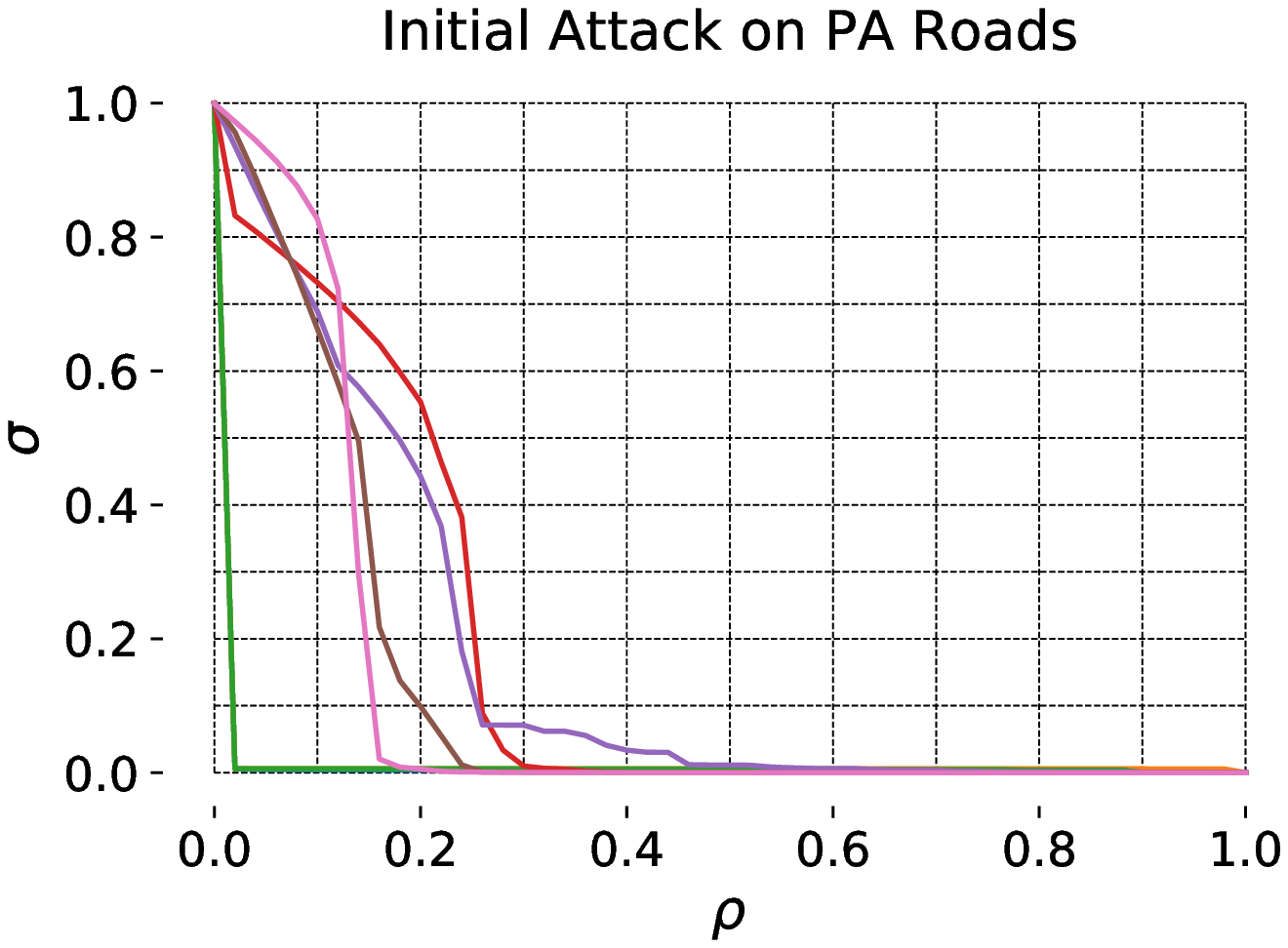} }}
    % \qquad
    \subfloat[Fragmentation by edges removed.]{{\includegraphics[width=0.48\textwidth]{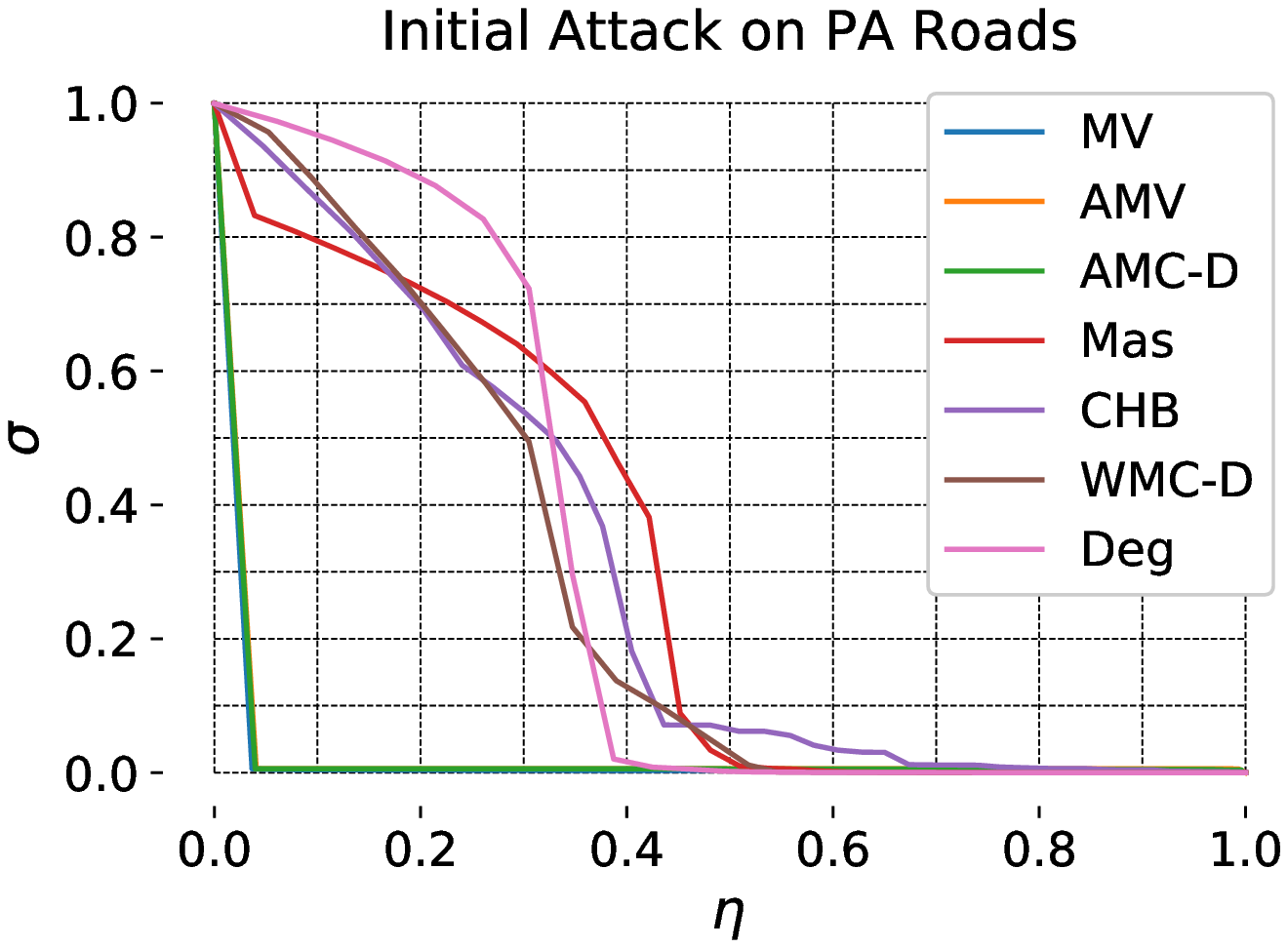} }}%
    \caption{Fragmentation of the PA-Road Network. Results for modularity-vitality, absolute modularity-vitality, and adjusted modular-centrality are extremely similar, so overlap on both figures. The legend in (b) also applies to the plot in (a), as well as both plots in Figure \ref{fig:election}.}%
    \label{fig:pa_roads}%
\end{figure*}

In Figure \ref{fig:pa_roads}, we see the fragmentation as a function of nodes and edges removed for each strategy.
Here, we see the largest component can be effectively brought to zero by removing 1.6\% of nodes with the highest modularity vitality values.
Removing only positive-valued nodes and removing nodes based on the absolute value of their modularity vitality value give very similar results.
The quantitative results, as measured by $C_\rho$ and $C_\eta$ are given in Table \ref{tab:cost_results}.

Additionally, we show the plot as a function of edges removed, for the same strategies.
The edge plot shows that while modularity vitality fragments the networks best given a number of nodes, it is also most efficient in terms of edges.

With just a degree-based attack, it would appear that the Pennsylvania road network is robust.
In fact, the community-aware centrality methods show that it is quite fragile.
Using an I-MV attack, the network can be almost completely fragmented by targeting only 1.6\% percent of nodes, bringing the largest component down to less than 1\% of its original size. 
This improves over the previous best measure, modular-degree, by a factor of over 8.5.

\subsection{Canadian Election Twitter Network}
\label{sec:election}
Another relevant application of community-aware centrality is social media networks.
Since social media has become so embedded in everyday life, scalable tools to understand it are essential.
Given the increasing polarization of online discussion, as described in concepts like filter bubbles, it is not enough to know what actors are important in general.
Instead, it is necessary to understand what actors are important within and between key online communities.
Community-aware centralities make this a measurable problem.

To study the effectiveness of our community-aware centrality measures we again use network fragmentation, due to its connection with diffusion.
Diffusion on social media is an important phenomena to understand as a way to combat misinformation, among other things.
Users who fragment the network when removed are those who have the most power to spread information. 

For this study, we use the network created from Twitter data collected during 2019 Canadian federal election.
The goal was to obtain a user-to-user communication network where users were active in political discussion.
First, we used a keyword search of Twitter's API to collect tweets related to the Canadian Election during the month of October.
From here, the unique users were recorded, giving a list of users active in political discussion.
While a user to user network could be constructed with this data, many links would be missing, since only tweets with our keywords can be used.
To construct a more complete network, Twitter's API was used to scrape the timelines of all users in our list.
This new collection was then truncated to the week of the election.
Finally, the all-communication graph was computed from this dataset, where link weights are the sum of the mentions, retweets, and quotes.
The ``Election Week" network, has 7,523,125 nodes, and 130,086,491 links.
When grouped with Leiden grouping, 557 communities were discovered, with a modularity of 0.691.
% We collected the timeline data of Twitter users that mentioned anything related to the Canadian election within the month of October, using a keyword search of Twitter's API. Afterwards, we truncated the the dataset so it consists only of tweets during the week around the election day (October 21, 2019). Next, the from the truncated timeline data we created hashtag to hashtag shared user networks where nodes represent hashtags used and link values represent number of users that used both hashtags within the time period of interest. This network has 1,024,672 nodes and 10,045,825 weighted edges. When grouped using Leiden grouping, we obtained 1414 clusters with a modularity of 0.700377. \comm{have to change this info to the user to user network stats} The highest node degree in this network is 19198. Similarly to the PA-Road network, this network has high modularity and node degree making it a really good candidate for group-centrality measures. Additionally, the sheer size of the network when it comes to number of links makes methods like MBA completely infeasible. 

Figure \ref{fig:election} shows the fragmentation results on the election week network.
Again, the quantitative results are given in Table \ref{tab:cost_results}.
The Adjusted-Modular-Degree measure and the classical degree measure effectively tie for node-based efficiency.

The structure and properties between the PA Roads network and the Election Week network are very different.
This difference is reflected in Figure \ref{fig:election}.
Perhaps most striking is how poorly the modularity vitality method performs in terms of $\rho$.
While other methods fragment the network removing 10-30\% of nodes, the positive modularity vitality method does not fragment the network until nearly all nodes are removed.

\begin{figure*}[!ht]
    \centering
    \subfloat[Fragmentation by nodes removed.]{{\includegraphics[width=0.48\textwidth]{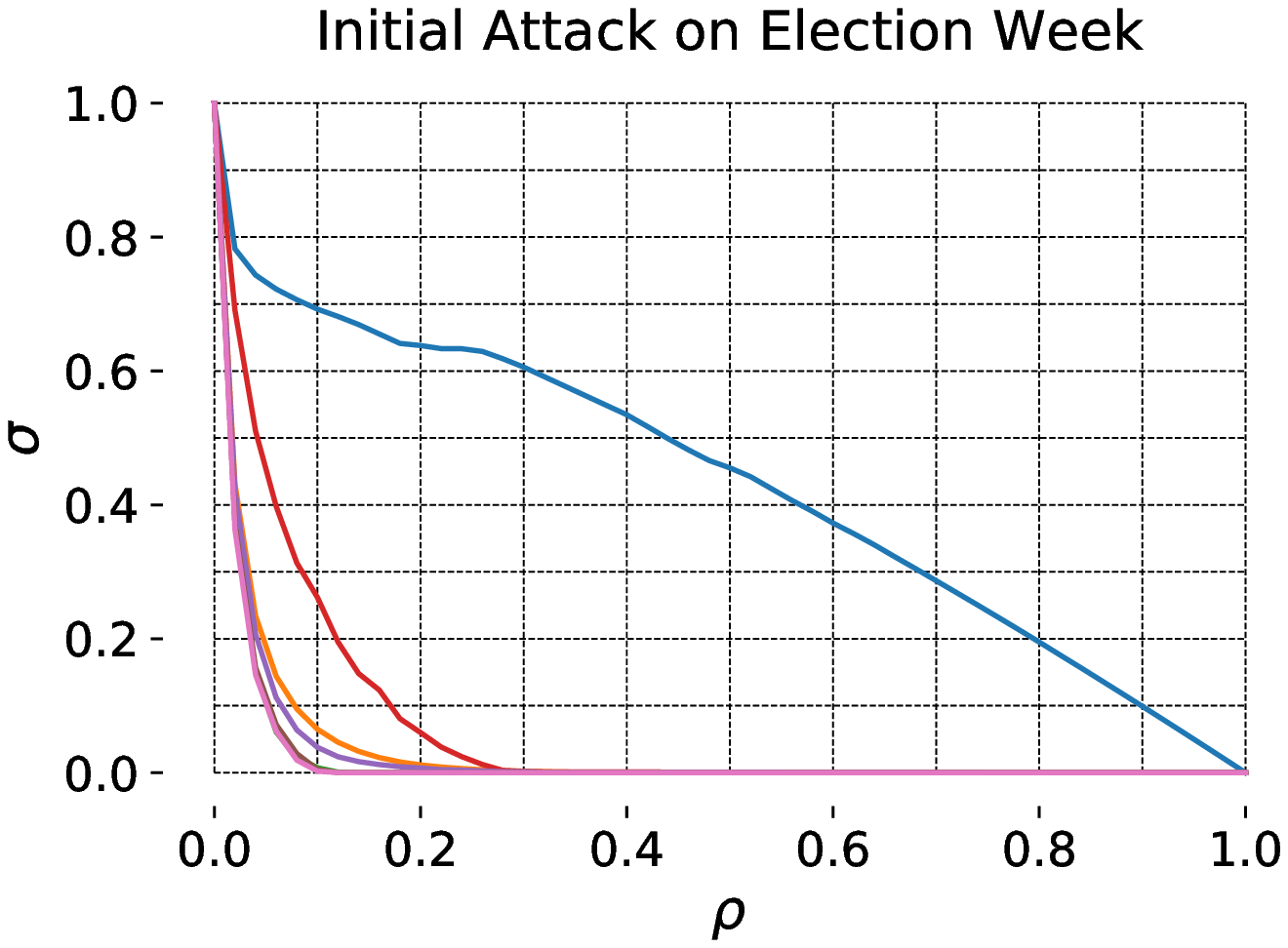} }}
    % \qquad
    \subfloat[Fragmentation by edges removed.]{{\includegraphics[width=0.48\textwidth]{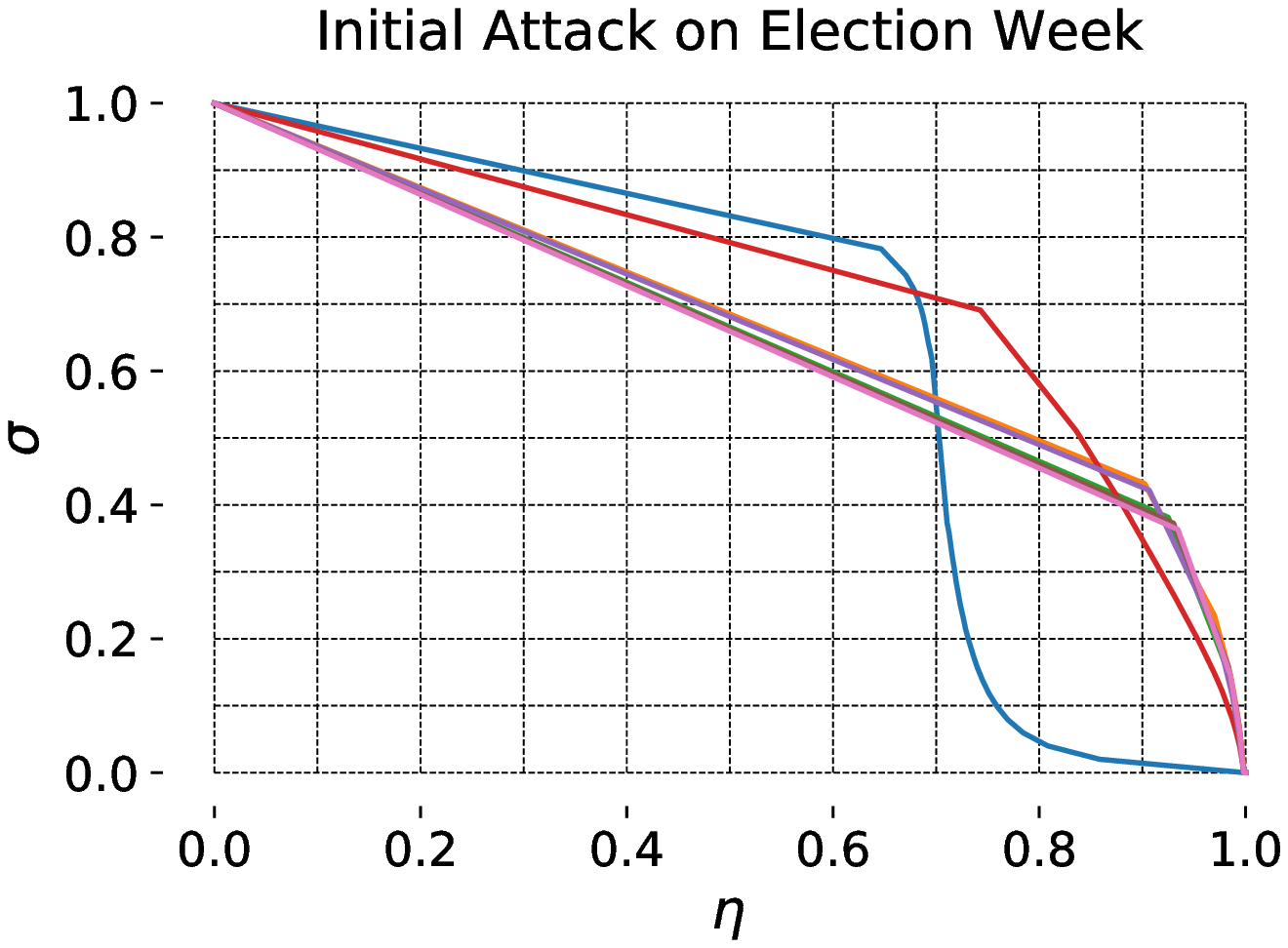} }}%
    \caption{Fragmentation of the Election-Week Network. The legend can be found in Figure \ref{fig:pa_roads}(b).}%
    \label{fig:election}%
\end{figure*}
% \comm{should we add the labels here as well - only if we make labels more concise and it might be worth thinking about putting two larger images one below another}
% \comm{I had the labels originally but i thought it was harder to read we could do top and bottom if we had to. I also have an image with just the legend by itself, since I thought we could show that separately but I wasn't sure how we could make that work.}

At first this seems like a failure of the modularity vitality method. 
However, inspection of Figure \ref{fig:election} (b), shows otherwise.
In terms of links, the modularity vitality method is actually the most efficient attack strategy.
This counter-intuitive result occurs because \textit{none} of the methods are very effective at fragmenting the network.
The largest component is small when removing 10-30\% of nodes using the other methods, but those nodes account for over 95\% of the networks links.
The difference in bridge-first Modularity-Vitality attacks and all others, however, does highlight the fact that networks with extremely high-degree nodes will require mixed or hub-first approaches to be efficiently fragmented. 
Even accounting for this aspect of the network, the election week network exhibits extreme robustness to these types of attacks.

\begin{table*}[!htb]
    \caption{Results for initial attacks on the PA-Road Network and the Canadian-Election Twitter Network. Methods introduced in this work are on the left of the double column. The best results are emboldened.}
    \label{tab:cost_results}
    \centering
    \tiny
    \resizebox{\textwidth}{!}{%
    \begin{tabular}{|l|c|c||c|c|c|c|c|}
         \hline 
         \textbf{Network} & MV & AMV & AMC-D & Mas & CHB & WMC-D & Deg  \\
         \hline \hline
         PA-Roads $C_\rho$ & \textbf{0.013} & 0.016 & 0.015 & 0.167 & 0.162 & 0.120 & 0.122
         \\
         \hline
         PA-Roads $C_\eta$ & \textbf{0.021} & 0.026 & 0.026 & 0.295 & 0.281 & 0.262 & 0.305
         \\
         \hline
         %\\
         \hline
         Election $C_\rho$ & 0.430 & 0.032 & \textbf{0.022} & 0.067 & 0.029 & 0.023 & \textbf{0.022}
         \\
         \hline
         Election $C_\eta$ & \textbf{0.635} & 0.673 & 0.656 & 0.732 & 0.667 & 0.654 & 0.651
         \\
         \hline
    \end{tabular}
    }
\end{table*}

\subsection{Discussion}
We see that modularity-based methods were very effective in all three studies.
The modularity vitality method shows that the PA Road network is over 8.5 times as fragile as could be seen with existing measures.
While the standard modularity vitality attack was effective on the PA-Road network, it was not on the Election week network.
However, using the absolute-value of modularity vitality resolves the issue.
This implies that attacking community-bridges is not enough.
By taking the absolute value, both community-bridges and community-hubs are attacked, leading to a method that is more robust across networks, even if it might not be the top-performer for specific networks.

As much as the values of a centrality are important, often the ranking of node centralities takes precedence.
This is the case with network attacks studied in this work.
So to go beyond the fragmentation results, the Kendall correlation of each method was calculated to compare the resulting node-rankings \cite{kermack1927contribution}.
Figures \ref{fig:pa_roads_corr} and \ref{fig:election_week_corr} show the correlations for the Road network and the Election network, respectively.
These correlations allow us to see the similarity of centrality ranking, regardless of the effectiveness of said ranking.
Though more clearly in Figure \ref{fig:pa_roads_corr}, we see that the existing degree-based metrics are highly correlated.
This is intuitive, as they are all alterations on a weighted degree.
While connecting certain groups might give a node a higher or lower score depending on the metric, a low degree usually leads to a low score.

\begin{figure}[!htb]
    \centering
    \includegraphics[width=\columnwidth]{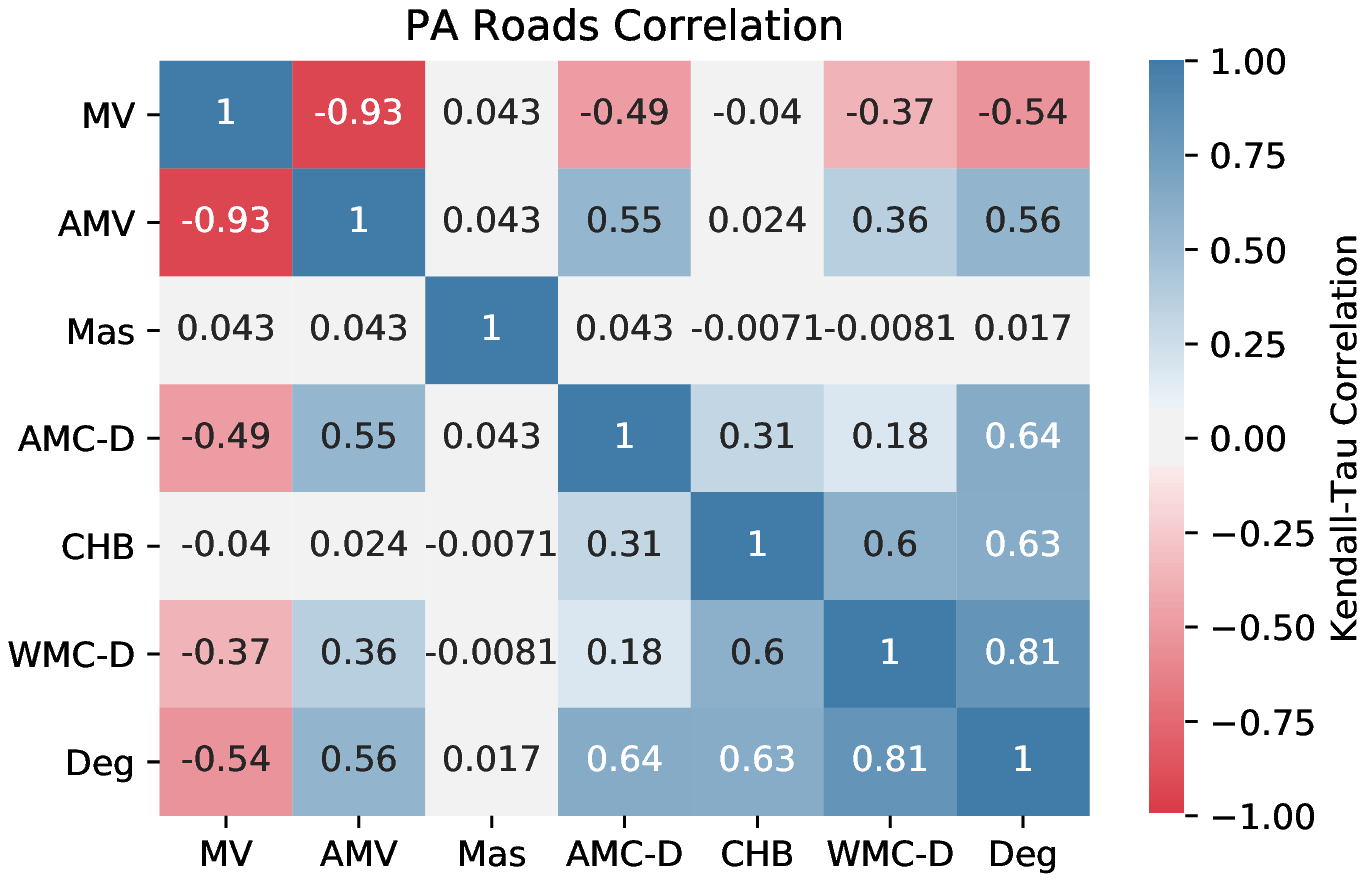}
    \caption{Kendall-Tau Correlation of the ``initial" attack strategies on the PA Roads Network.}
    \label{fig:pa_roads_corr}
\end{figure}

\begin{figure}[!htb]
    \centering
    \includegraphics[width=\columnwidth]{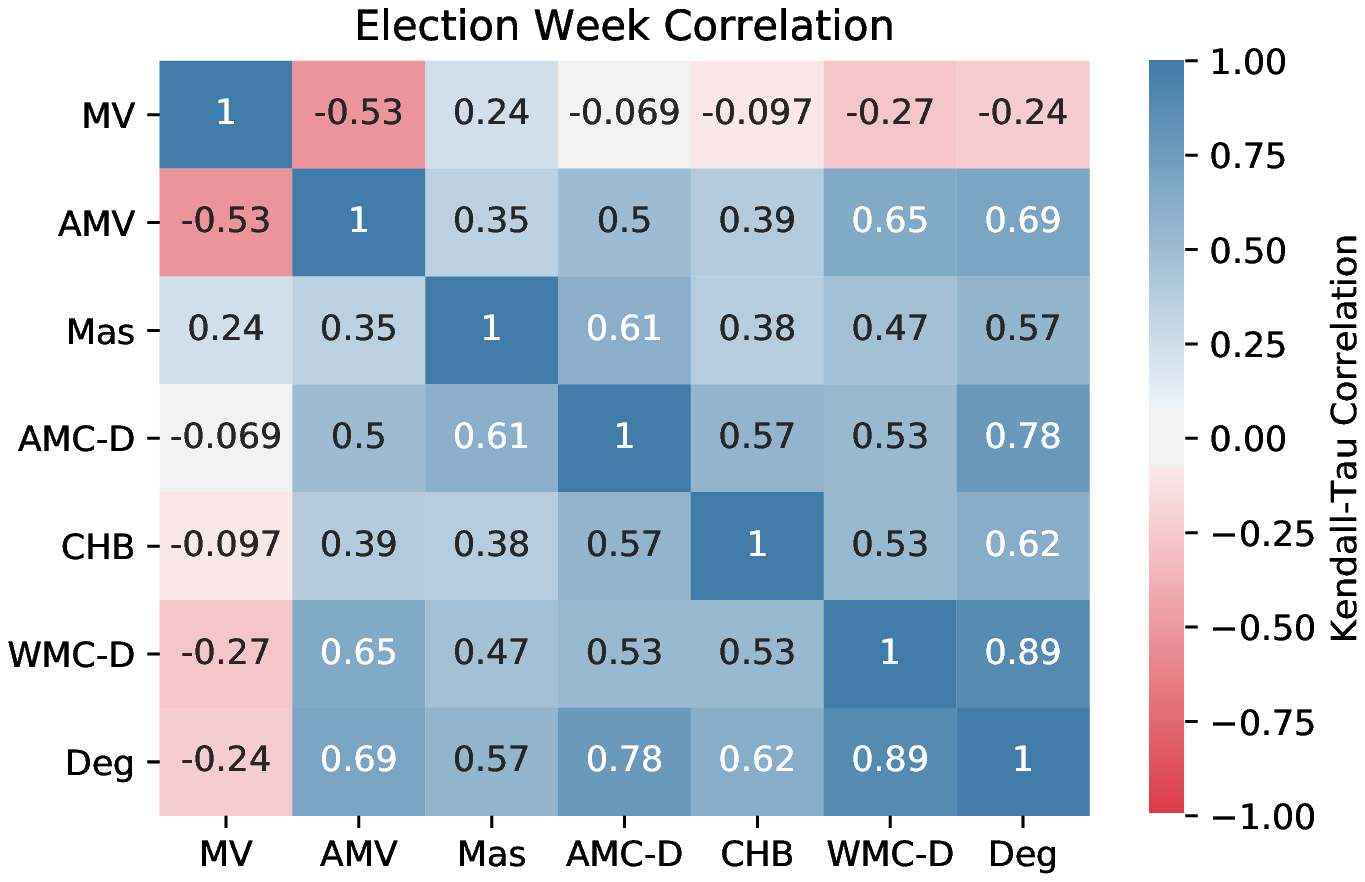}
    \caption{Kendall-Tau Correlation of the ``initial" attack strategies on the Election Week Network.}
    \label{fig:election_week_corr}
\end{figure}

Absolute modularity vitality has moderate correlation to the existing methods.
Most notably, it has strongest connections to the modular-degree centrality. 
However, the standard modularity vitality has lower correlation.
The combination of these observations show that modularity vitality is leveraging similar information to modular-centrality applied to degree, while giving those values a sign indicating the type of central role they are playing: hub or bridge.
The correlation between modularity vitality and its absolute value give further information about a network's structure.
In the road network, the strong negative correlation (-0.93) indicates that most nodes are community hubs, not bridges.
The same is seen with the election week network though to a lesser extent since the correlation is -0.53.
This result is consistent with the networks' high modularities, and that the road network's modularity is much higher than election week's.
This added information is a key contribution of the work, and will be of use for deep dives into network data.

Lastly, we see that our adjusted version of the modular-degree centrality gives improvements over the original modular-centrality, and that it has stronger correlations to the modularity-based methods.
Based on these results, it is possible that the generalized modular-centrality should also be adjusted to favor bridges.
In general, it seems that bridge-favoring methods have performed best in our experiments.
This is intuitive from a diffusion perspective.
If a network is highly modular, the groups themselves can act as silos to contain what is being diffused if the community-bridge nodes are removed.
For the road network, modular-centrality points to areas that need extra redundancy to create a more robust transportation network.

From the social network, we see that targeting bridges is not always enough. In the presence of community bridges and large community hubs, an approach that attacks both is necessary.
The absolute modularity vitality method attacks both, but the election network was robust to even this attack.

In the context of misinformation on social media, both users acting as hubs within fringe communities and users attempting to bridge communities play key roles.
Further, network robustness is both a strength and a weakness in this context.
A robust communication network means many users have the power to spread information.
This allows for distributed power of information but also means that user-based interventions to hamper the spread of misinformation will be ineffective.
It is commonly stated that network metrics may identify key points where misinformation diffusion can be stopped \cite{shin2018diffusion}.
However, we find that not to be the case.
The networks are too robust to have a number of points that control diffusion.
This may explain why disinformation tends to repeatedly resurface \cite{shin2018diffusion}.
While identifying key users spreading misinformation is useful for characterizing efforts to share fake news, we must look beyond user-based interventions to actually fight its spread.

\section{Community Deception}
The goal of community deception is to hide a community from detection algorithms \cite{fionda2017community,chen2019ga}.
% The goal of community deception is to attack nodes or edges such that the modularity of the resulting network is minimized \cite{chen2019ga}.
The motivation behind this is typically to protect privacy.
Sensitive user data is often over-mined, and network community information is one of the ways in which identifiable information can be discovered.
The idea, then, is to alter the network such that community information is harmed, as measured through modularity of the original grouping on the altered network.

Previously,  modularity vitality attacks were used to maximize fragmentation.
However, fragmentation is only a by-product of the modularity vitality attack.
The attack's true objective is to maximize modularity.
As shown in Figure \ref{fig:election_mod_frag}, the same attack used to fragment the Election Week network increases its modularity.
In fact, all of the fragmentation methods increase modularity.
By attacking nodes which bridge communities, the communities become more separated and modularity increases.
The figure shows that the different attacks give similar change in modularity, though the vitality approach is most efficient, since it explicitly increases modularity.

\begin{figure*}[!ht]
    \centering
    \subfloat[Modularity by nodes removed.]{{\includegraphics[width=0.48\textwidth]{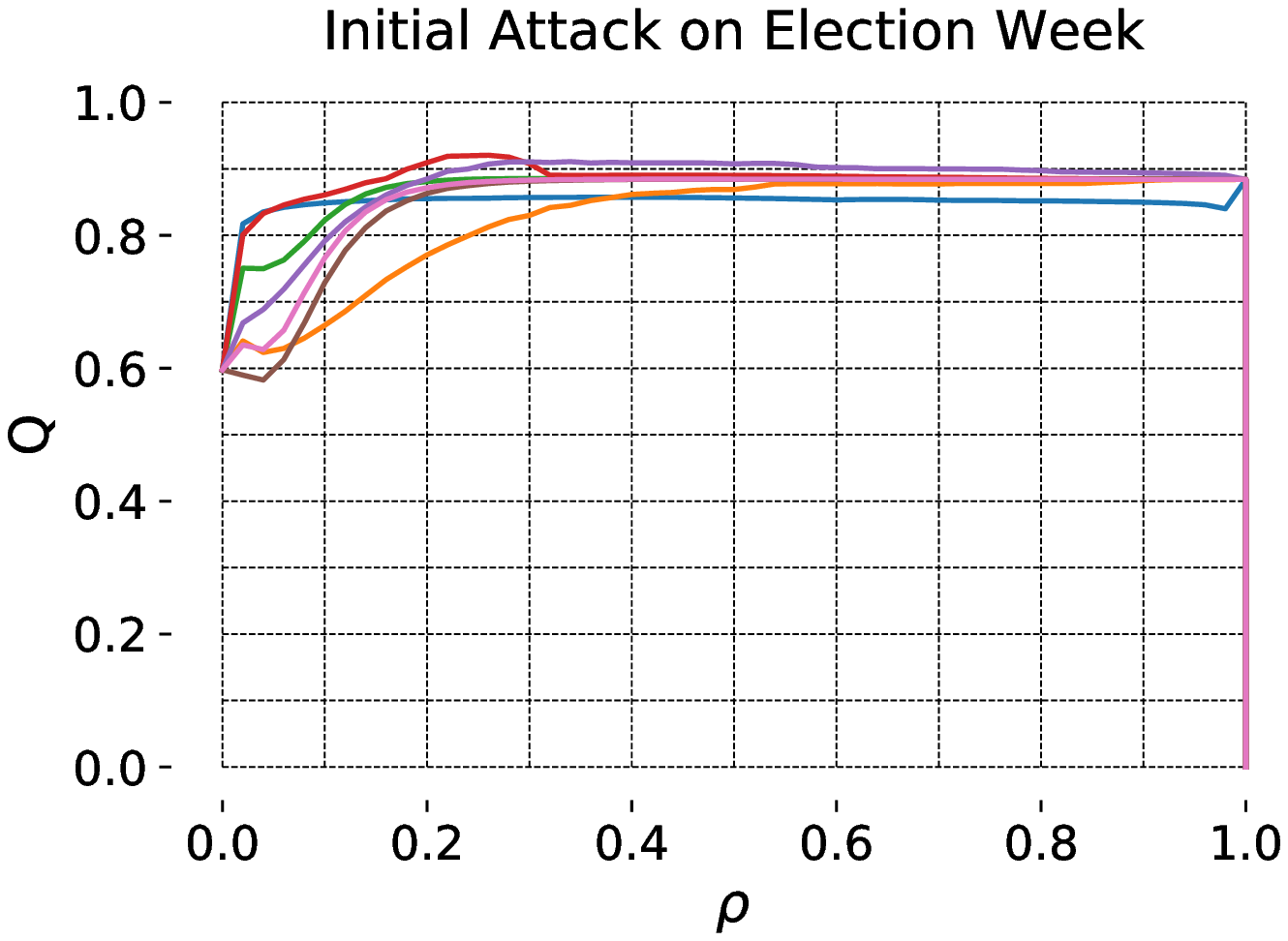} }}
    \subfloat[Modularity by edges removed.]{{\includegraphics[width=0.48\textwidth]{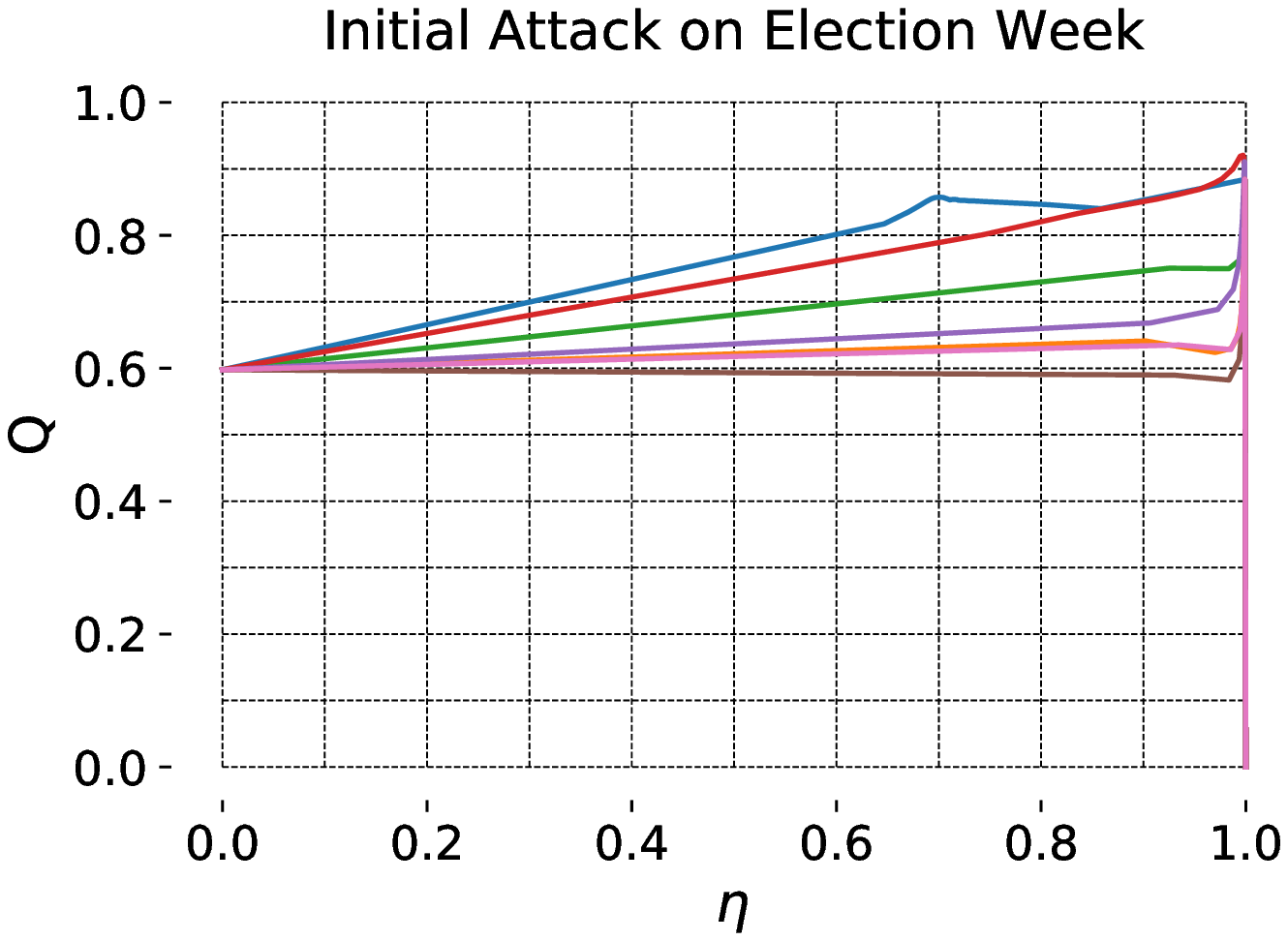} }}%
    \caption{Changes in modularity due to the fragmentation attacks, all using the initial strategy.}%
    \label{fig:election_mod_frag}%
\end{figure*}

For community deception, the power of the modularity vitality method is the ability to select community hubs instead of bridges.
Since community deception seeks to \textit{minimize} modularity, the attack can simply be reversed by selecting the node with the highest \textit{positive} modularity vitality.
Thus, a greedy solution to the node-based community deception problem is a recomputed, reversed, modularity vitality attack.
A faster approximation to this is the initial, reversed, modularity vitality attack.

% The previous approach to community deception relied on a genetic algorithm to select nodes or edges to remove \cite{chen2019ga}.
% Genetic algorithms are difficult to scale, and as such the method could only be applied to networks with a few hundred nodes.
Previous methods considered edge rewirings.
% in a broad sense., which have the benefit of preserving the degree distribution. 
In practice, this may be difficult or problematic, since links in the altered network may or may not truly exist.
An alternate approach is to remove a small subset of the nodes.
While the altered network will have less links than the original, all links in the altered network are links in the original network.
By leveraging the modularity equation itself, we can select the nodes guaranteed to minimize modularity in a scalable way.
As a demonstration of this, community-deception was performed on the Canadian Election network, and the results are given in Figure \ref{fig:election_deception}, for both the fast approximation of the greedy approach.
For networks of this scale, even the greedy approach is very expensive.
Using the initial attack strategy, modularity can be dropped from approximately 0.7 to just over 0.4 by removing less than 2\% of nodes, as shown in Figure \ref{fig:election_deception} (a).
However, Figure \ref{fig:election_deception} (b) shows that this comes at a cost of 45\% of the nodes edges.
Modularity can be decreased further, though with diminishing returns.
Modularity levels out when about 8\% of nodes and 50\% of edges are removed, resulting in a final modularity of 0.36, which is a 49\% decrease.

We know from the modularity vitality equation that this strategy is attacking hubs, and this is seen by the fact that the first 2\% of nodes targeted are accounting for 45\% of links.
Intuitively, this suggests that a user's connections to Twitter accounts that are popular within a community reveal that user's identity as a community member.
If this identity is to be protected, then hiding these key hubs, as measured through modularity vitality, is the most effective strategy.

This presents a dilemma to social media users wishing to conceal their online community: the most effective strategy is to un-friend or un-follow the community's leaders, which would undoubtedly harm the community itself.
The extent of this harm is dependent on the platform. 
On Twitter, for example, users may interact without a following relationship.
On other platforms, like Facebook, the extent of these interactions is more limited.
This leaves it up to the social media companies to protect their users by allowing them to hide their affiliations to other accounts, or at least to community leaders.

\begin{figure*}[!ht]
    \centering
    \subfloat[Modularity by nodes removed.]{{\includegraphics[width=0.48\textwidth]{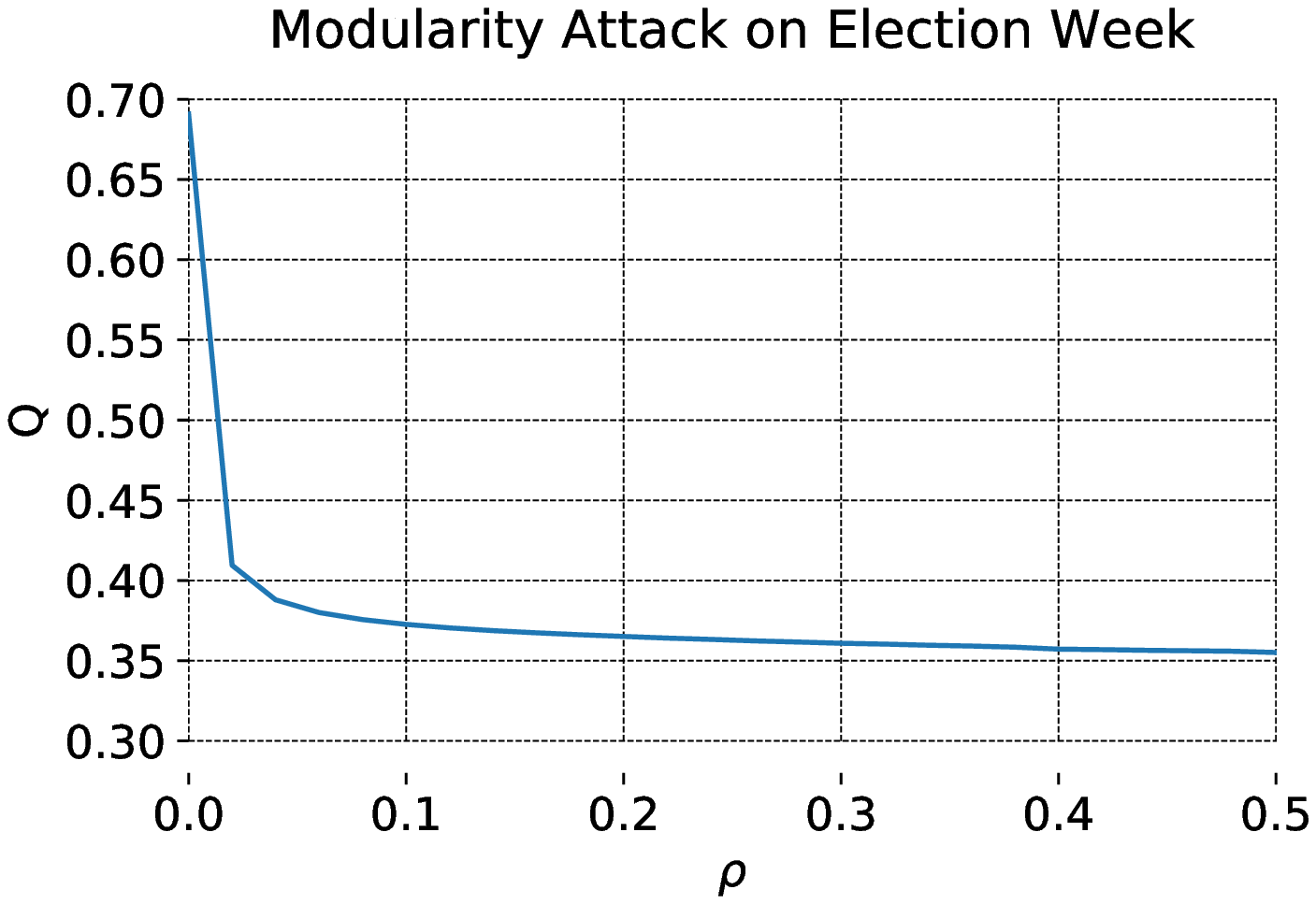} }}
    % \qquad
    \subfloat[Modularity by edges removed.]{{\includegraphics[width=0.48\textwidth]{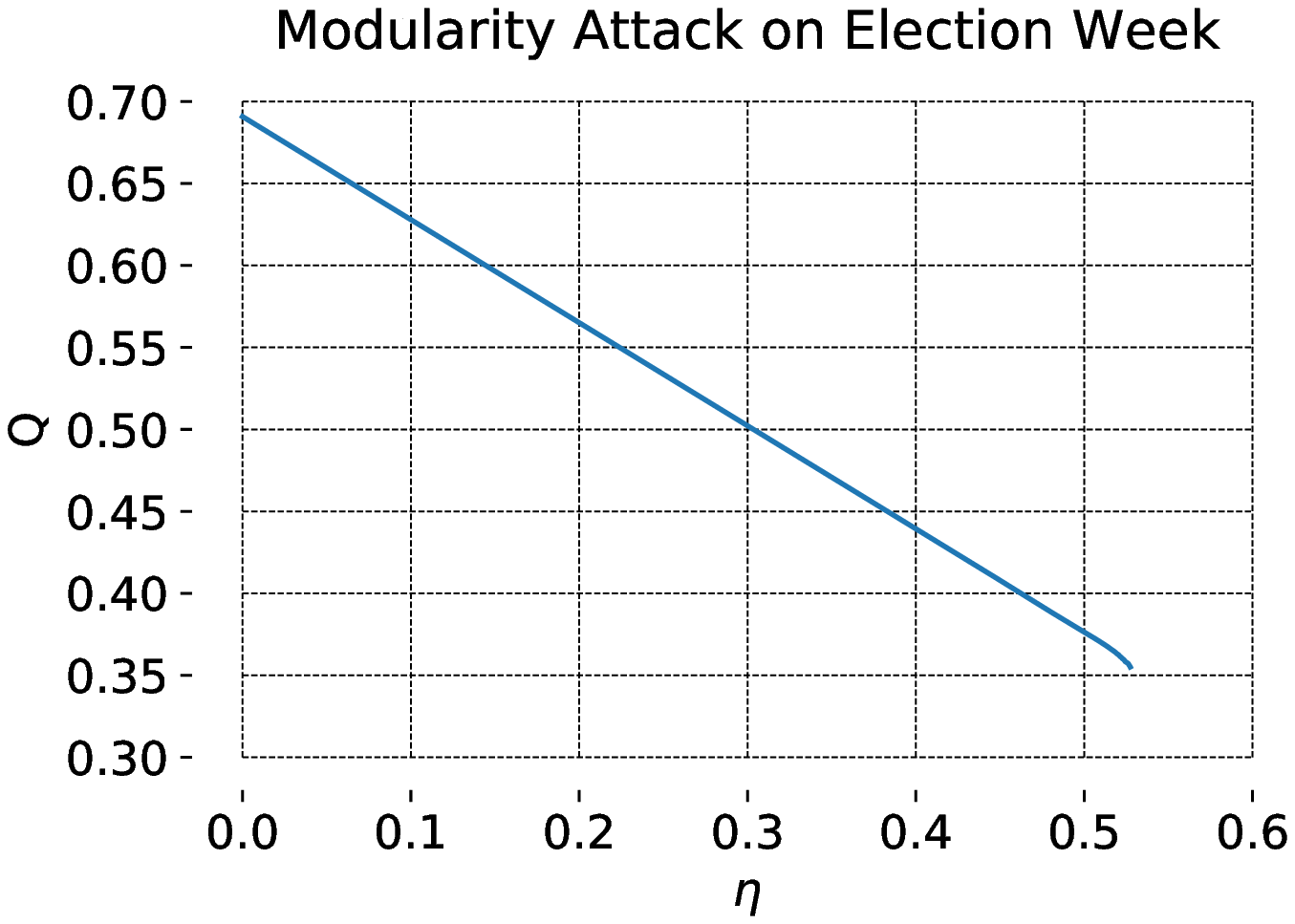} }}%
    \caption{Community deception on the Election-Week Network using the initial attack strategy.}%
    \label{fig:election_deception}%
\end{figure*}

A choice must be made when performing community-deception: At what point is does the cost of deleting network data outweigh the benefit of decreased modularity?
For this case, if only nodes are of interest, there is only a very small price to pay to decrease modularity by 41\%.
If edges are important to consider, the cost is higher.
This is only an approximation of the greedy approach.
The greedy approach, recomputed reversed modularity vitality attack, will likely achieve even better results.
A study of this comparison on smaller networks along with non-greedy alternatives is left for future work.
Additionally, this type of attack could be combined with the previously studied edge re-wiring attacks to give even obscure communities even more effectively.
Lastly, explicit modularity \textit{maximization} through node removal could have interesting applications, such as node filtering to obtain more interpretable groups. 
This, too, is left for future work.

\section{Conclusion}
Both centrality measures and community detection are core research areas in Network Science.
At the intersection of these areas, community-aware centrality measures attempt to quantify how central nodes are given a network partition.
Though the areas are closely related, the current community-aware centralities are not strongly tied to community theory.
Here, we examine modularity vitality, which measures the change in modularity of a network and its partition if a node were to be removed.
Thus, modularity vitality measures each node's individual contribution to group structure.
This measure is directly derived from the modularity equation, giving the measure a strong link to community detection theory.
We derive a scalable method of calculating modularity vitality, which improves over the naive method usually by a factor of $N$, allowing for the analysis of massive networks.

Unlike existing measures, however, modularity vitality not only quantifies how important a node is, but in which way it is important.
Once groups are introduced, nodes can take on two central roles: hubs within their community, and bridges between communities. 
The role is encoded in the sign of modularity vitality; nodes with negative values are bridges, while positive-valued nodes are hubs.

Modularity vitality was tested in three settings: generated cellular networks, the Pennsylvania Road Network, and a Twitter network capturing conversation around the Canadian Election of 2019.
In these tests, we saw that modularity-based methods outperformed existing community-aware centralities as measured through network fragmentation.
Our results show that the Pennsylvania Road network is over 8.5 times more fragile than the existing measures would have concluded, and that community bridges play a more important role than community-hubs.

Further, we saw that the social media conversation network is very robust, and that both community-hubs and community-bridges play important roles in that robustness.
Additionally, the presence of extremely high-degree nodes lead to bridge-first methods performing worst, since high-degree nodes are typically well-grouped.
Robust communication is aligned with Social Media's business interests, since they give many users the potential to ``go-viral," encouraging engagement.
The specific source of this robustness remains an area of future research, though the balance of nodes with positive and negative modularity vitality nodes suggests that the presence of many community bridges may be a factor.
This theory is in agreement with the results on the PA network, which has very few bridges and is extremely fragile.
A robust communication network suggests that user-based interventions are not an effective strategy to fight the spread of misinformation, since an extreme intervention like user-removal only has a small impact on potential diffusion.

Many prior community-aware centralities give preference to community-bridges over community-hubs.
Using modularity-vitality without taking the absolute value also targets bridges instead of hubs.
Based on this, we include a modified version of Ghalmane's generalized community-aware centrality measure where bridges are favored instead of hubs.
This alternate version of their community-aware centrality when applied with degree performed better in our experiments.
Further studies could explore if this change is an improvement when combined with classical centrality measures other than degree.

Lastly, we recognize that modularity vitality can be used as a greedy solution to the community-deception problem.
Community-deception seeks to remove nodes or edges to maximally reduce modularity, which could be important for privacy protection in data distribution.
While previous work uses a genetic algorithm to select nodes or edges which may reduce modularity, modularity vitality can be used to select the node that will maximally decrease modularity.
Recomputing modularity vitality at each removal provides a greedy solution to the community-deception problem, but we use the faster approximation: only calculating modularity vitality once.
While the genetic algorithm could scale to networks with two hundred nodes, the approximation of the greedy method scales to networks with millions of nodes and hundreds of millions of links, as demonstrated on the election week network.
Through this demonstration we see that modularity can be decreased by 41\% while only removing less than 2\% of nodes, but this comes at a cost of 45\% of the edges.
Still, community-deception is a combinatorial optimization problem, so there are almost definitely better solutions.
Going forward, the greedy approach using modularity vitality may be a useful baseline.

The findings suggest that the most effective strategy currently available to users attempting to protect their community identity is to remove their connections to community leaders. 
This strategy clearly will negatively impact the community itself, leaving users with little options to protect their privacy. 
It is up to social media companies to protect this privacy by allowing users to hide their connections.

We have demonstrated that modularity vitality is a powerful method of finding nodes that bridge communities or are hubs within their communities at scale.
Modularity is but one of many cluster evaluation functions.
Exploration of vitalities of these other functions could give an alternative view of nodal contributions to community structure.
Community quality vitalities, and community-aware centralities more generally have many applications to areas such as infrastructure robustness, traffic improvement, immunization, and social media.
Deeper dives into these application areas using the techniques proposed here could be fruitful areas of future research.

\ifCLASSOPTIONcompsoc
  % The Computer Society usually uses the plural form
  \section*{Acknowledgments}
\else
  % regular IEEE prefers the singular form
  \section*{Acknowledgment}
\fi
This work was supported in part by the Office of Naval Research (ONR) Award N000141512797 Minerva award for Dynamic Statistical Network Informatics, ONR Award N000141712675, and the Center for Computational Analysis of Social and Organization Systems (CASOS). Thomas Magelinski was also supported by an ARCS Foundation scholarship. The authors would also like to thank the anonymous reviewers for their thoughtful and detailed feedback. The views and conclusions contained in this document are those of the authors and should not be interpreted as representing the official policies, either expressed or implied, of the ONR.

% Can use something like this to put references on a page
% by themselves when using endfloat and the captionsoff option.
\ifCLASSOPTIONcaptionsoff
  \newpage
\fi

% trigger a \newpage just before the given reference
% number - used to balance the columns on the last page
% adjust value as needed - may need to be readjusted if
% the document is modified later
%\IEEEtriggeratref{8}
% The "triggered" command can be changed if desired:
%\IEEEtriggercmd{\enlargethispage{-5in}}

% references section

% can use a bibliography generated by BibTeX as a .bbl file
% BibTeX documentation can be easily obtained at:
% http://mirror.ctan.org/biblio/bibtex/contrib/doc/
% The IEEEtran BibTeX style support page is at:
% http://www.michaelshell.org/tex/ieeetran/bibtex/
%\bibliographystyle{IEEEtran}
% argument is your BibTeX string definitions and bibliography database(s)
%\bibliography{IEEEabrv,../bib/paper}
%
% <OR> manually copy in the resultant .bbl file
% set second argument of \begin to the number of references
% (used to reserve space for the reference number labels box)
% \begin{thebibliography}{1}

% \bibitem{IEEEhowto:kopka}
% H.~Kopka and P.~W. Daly, \emph{A Guide to \LaTeX}, 3rd~ed.\hskip 1em plus
%   0.5em minus 0.4em\relax Harlow, England: Addison-Wesley, 1999.

% \end{thebibliography}
\bibliographystyle{IEEEtran}
\bibliography{references.bib}

% biography section
% 
% If you have an EPS/PDF photo (graphicx package needed) extra braces are
% needed around the contents of the optional argument to biography to prevent
% the LaTeX parser from getting confused when it sees the complicated
% \includegraphics command within an optional argument. (You could create
% your own custom macro containing the \includegraphics command to make things
% simpler here.)
%\begin{IEEEbiography}[{\includegraphics[width=1in,height=1.25in,clip,keepaspectratio]{mshell}}]{Michael Shell}
% or if you just want to reserve a space for a photo:

\begin{IEEEbiography}[{\includegraphics[width=1in,height=1.25in,clip,keepaspectratio]{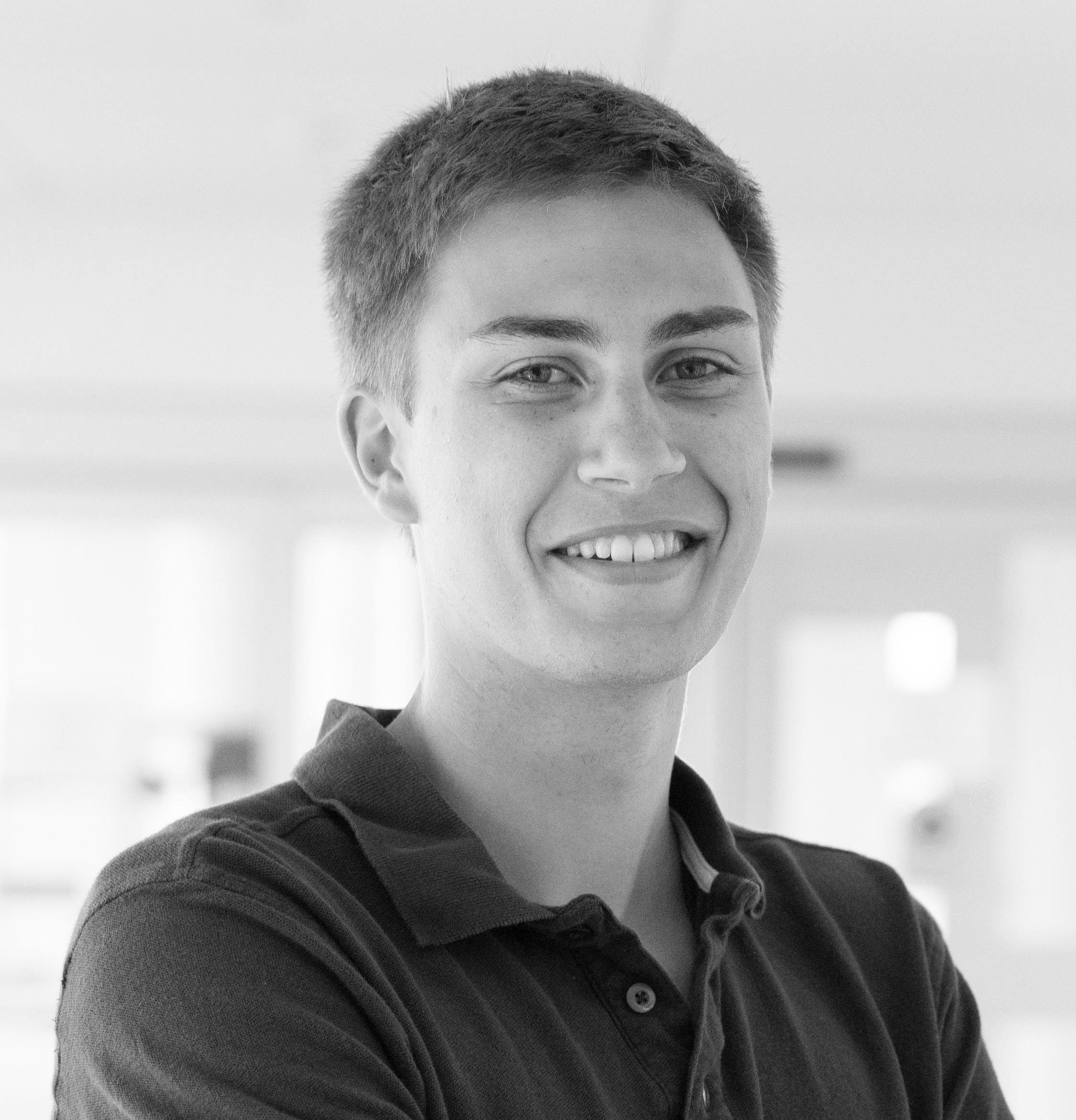}}]{Thomas Magelinski}
Thomas Magelinski received the B.S. degree in engineering science and mechanics from Virginia Tech, Blacksburg, VA, USA, in 2017. He is currently pursuing the Ph.D. degree in societal computing with Carnegie Mellon University, Pittsburgh, PA, USA, under the supervision of Prof. K. M. Carley.
His current research interests include network science and the dynamics of complex systems.  
\end{IEEEbiography}

% if you will not have a photo at all:
\begin{IEEEbiography}[{\includegraphics[width=1in,height=1.25in,clip,keepaspectratio]{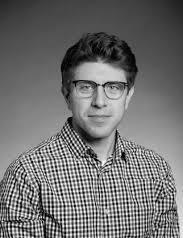}}]{Mihovil Bartulovic}
Mihovil Bartulovic received the M.S. degree from Carnegie Mellon University, Pittsburgh, PA, USA in 2017 in Electrical and Computer Engineering, and a M.S. degree from the University of Zagreb, Zagreb, Croatia in 2014 in Control Engineering and Automation, and a B.S. degree from University of Zagreb, Zagreb, Croatia in 2012 in Electrical Engineering and Information Technology. He is currently a Ph.D. student at Carnegie Mellon University in the Electrical and Computer Engineering department in the College of Engineering, in Pittsburgh, PA USA, under the supervision of Prof. K. M. Carley.
His current research interests include network science, sequential network data and trails.
\end{IEEEbiography}

% insert where needed to balance the two columns on the last page with
% biographies
%\newpage

\begin{IEEEbiography}[{\includegraphics[width=1in,height=1.25in,clip,keepaspectratio]{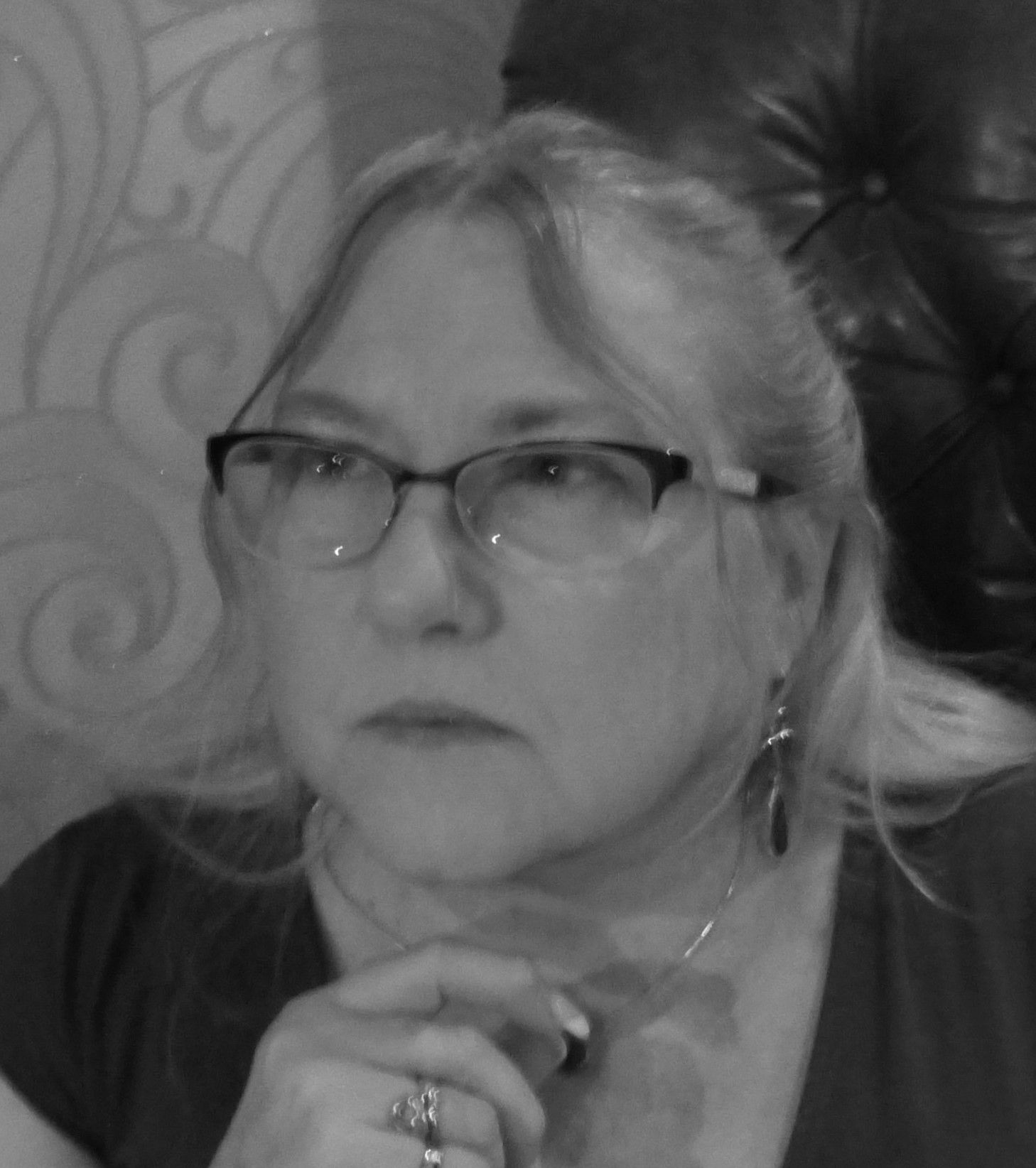}}]{Kathleen M. Carley}
Kathleen M. Carley received the Ph.D. degree from Harvard University, Cambridge, MA, USA in 1984 in sociology, and two S.B. degrees from the Massachusetts Institute of Technology, Cambridge, MA, USA in 1978 in political science and in economics, and an H.D. from the University of Zurich.
She is a professor at Carnegie Mellon University in the Institute of Software Research department in the School of Computer Science, in Pittsburgh, PA USA.
She directs the center for Computational Analysis of Social and Organizational Systems (CASOS) and the center for Informed Democracy and Social-cybersecurity (IDeaS) at Carnegie Mellon University.
She is the CEO and President of Carley Technologies Inc, A.D.B. as Netanomics, in Sewickley, PA, USA. 
Her research examines complex socio-technical systems using high-dimensional dynamic networks, agent based models, and text mining.
Prof. Carley is a member of the Association for Computing Machinery, the Academy of Management, INFORMS, the American Association for the Advancement of Science, the International Network for Social Network Analysis and the American Sociological Association.
She is the editor in chief of Computational, Mathematical and Organizational Theory, and serves on numerous editorial boards.
She is an IEEE Fellow, and an associate editor for IEEE Transactions on Computational Social Systems.
\end{IEEEbiography}

% You can push biographies down or up by placing
% a \vfill before or after them. The appropriate
% use of \vfill depends on what kind of text is
% on the last page and whether or not the columns
% are being equalized.

%\vfill

% Can be used to pull up biographies so that the bottom of the last one
% is flush with the other column.
%\enlargethispage{-5in}
\clearpage
\appendices

\section{Community-Degree}
\label{sec:community_degree}
Though the signed aspect of modularity vitality is quite useful, it is possible that a node has high positive and negative components of modularity in Equation \ref{eqn:new_modularity}, resulting in a modularity vitality near zero.
These nodes may be particularly important for networks with low modularity.
We can adjust Equation \ref{eqn:new_modularity} to obtain a measure which credits nodes for hub \textit{and} bridge behavior. By changing the subtraction of $h_{i,c}$ to addition, this effect is achieved. After this adjustment, there is no need for a separate internal term, making the final measure:
\begin{equation}
    CD_i = \frac1{4 \left ( M-k_{i}\right )^2} \sum_{c \in \mathfrak{C}} \left ( d_c + h_{i,c}\right )^2
\end{equation}
% This measure is high for \textit{both} hubs and bridges, with hubs getting slight preference.
Again, attachment to large groups is favored over attachment to small groups.
Since this is just weighting the degree, we will call it Community-Degree (CD).
The previous results including this measure are shown in Tables \ref{tab:generated_results_extended}-\ref{tab:generated_results_er}, and in Figures \ref{fig:pa_roads_corr_extended} and \ref{fig:election_week_corr_extended}.

\begin{figure}[!htb]
    \centering
    \includegraphics[width=\columnwidth]{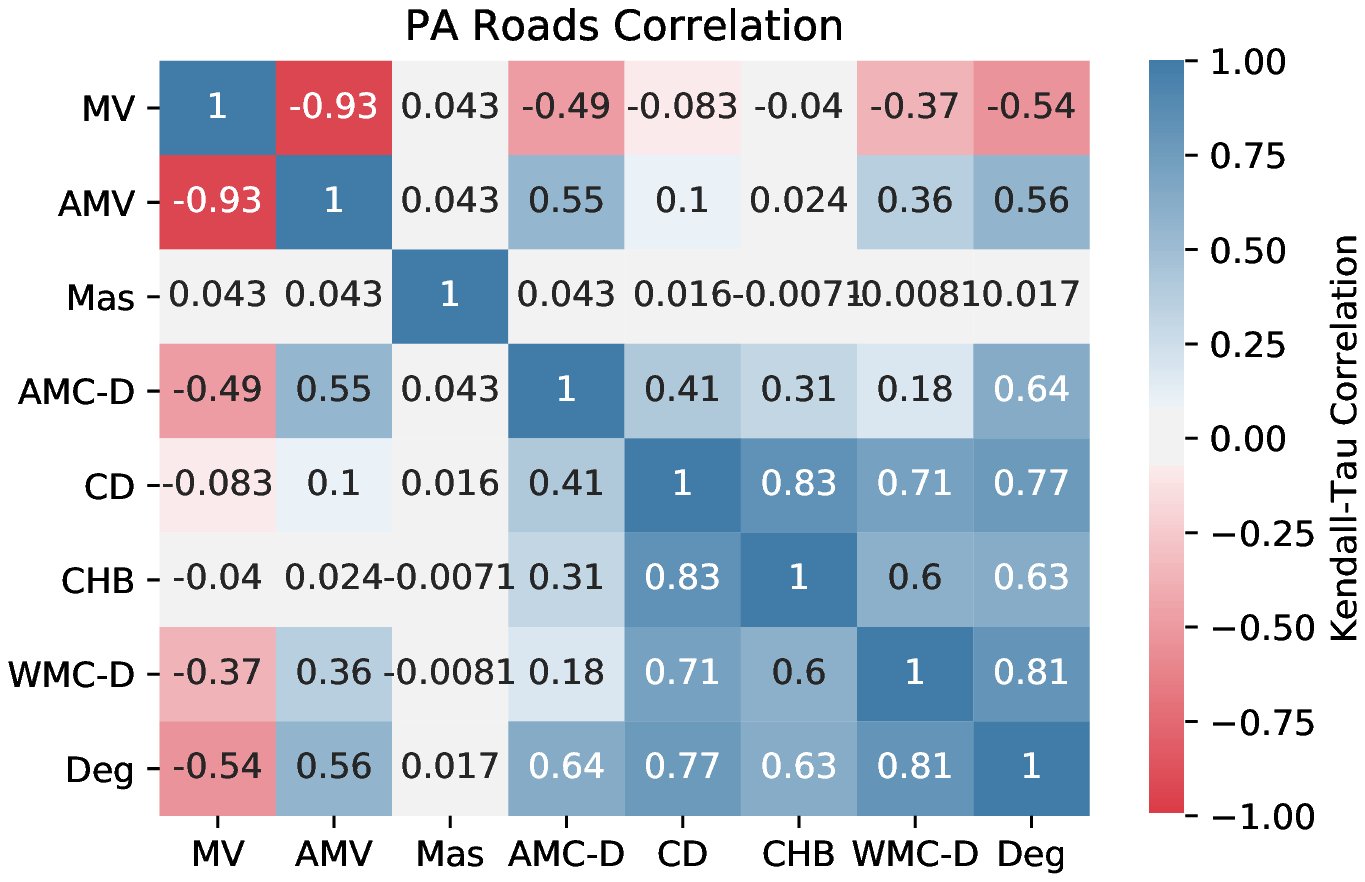}
    \caption{Extended version of Figure \ref{fig:pa_roads_corr} to include Community-Degree. Kendall-Tau Correlation of the ``initial" attack strategies on the PA Roads Network.}
    \label{fig:pa_roads_corr_extended}
\end{figure}

\begin{figure}[!htb]
    \centering
    \includegraphics[width=\columnwidth]{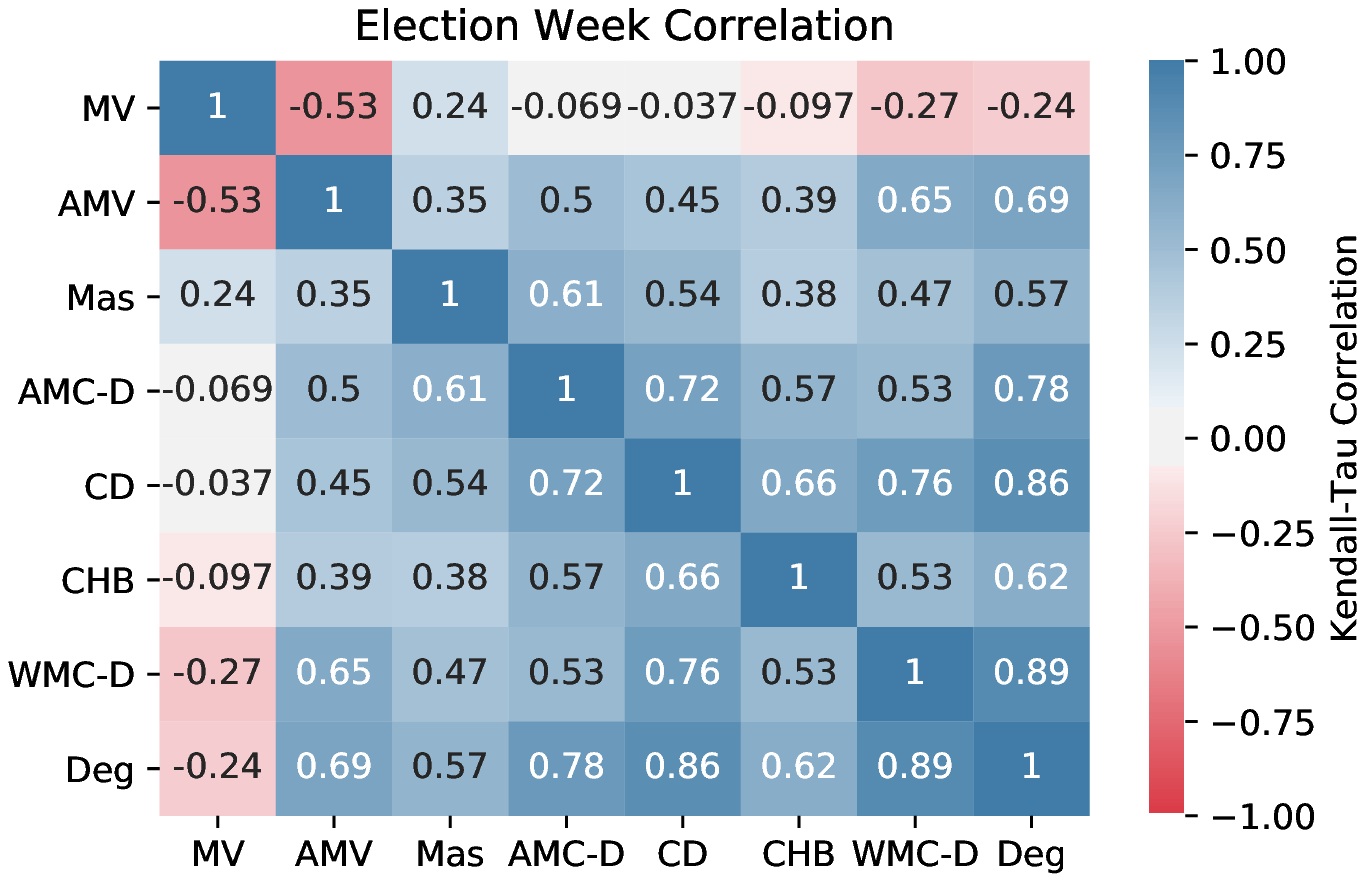}
    \caption{Extended version of Figure \ref{fig:election_week_corr} to include Community-Degree. Kendall-Tau Correlation of the ``initial" attack strategies on the PA Roads Network.}
    \label{fig:election_week_corr_extended}
\end{figure}

Community-Degree is highly correlated with degree, and so performs similarly.
Based on these results, it seems that the signed centrality is more effective while also conveying more information.

\section{Results on Other Generated Networks}
For completeness, networks lacking strong group structure were generated.
Scale-free networks were generated using the Barab\'{a}si-Albert model using parameters $n=1000,$ $m=8,$ and $\gamma=1.5.$
Over the 100 iterations tested the average modularity from Leiden grouping was 0.196.
The results are given in Table \ref{tab:generated_results_sf}.

Erd\H{o}s-R\'enyi networks with parameters $n=1000,$ $p=0.015$, were also created.
These parameters were chosen to give similar density to the cellular networks previously studied.
Networks were generated until a connected network was reached.
Over the 100 connected networks, the average modularity from Leiden grouping was 0.240.
The results are given in Table \ref{tab:generated_results_er}.

The results across network types are similar.
First, none of the attacks are very effective.
Both the node and edge cost are higher than that seen for the Election network, which was robust.
With that said, the degree and modular-degree attacks were consistently the most efficient in terms of nodes.
This is intuitive; without more meaningful structure, the most effective strategy is to look at the node with the most edges.
This results in a high edge-based cost, however.
So we see that modularity vitality actually performs best in terms of edge-cost.
Lastly, we see that the adjusted-modular degree that we proposed performs similarly to the original.
Adjusted measure performs much better on highly modular networks, while performing similarly on less modular networks.

\begin{table*}[!htb]
    \caption{Extended version of Table \ref{tab:generated_results} to include Community-Degree. Results for attacks on the generated cellular networks. The values shown are the average over 100 simulations. Methods introduced in this work are on the left of the double column. The best results are emboldened.}
    \label{tab:generated_results_extended}
    \centering
    \tiny
    \resizebox{\textwidth}{!}{%
    \begin{tabular}{|l|c|c|c||c|c|c|c|c|}
         \hline
         \textbf{Method} & MV & AMV & CD & AMC-D & Mas & CHB & WMC-D & Deg  \\
         \hline \hline
         Initial $C_\rho$ & \textbf{0.165} & 0.211 & 0.361 & 0.169 & 0.198 & 0.383 & 0.381 & 0.347
         \\
         \hline
         Initial $C_\eta$ & \textbf{0.247} & 0.308 & 0.560 & 0.268 & 0.293 & 0.576 & 0.599 & 0.578
         \\
         \hline
         %\\
         \hline
         MBA $C_\rho$ & \textbf{0.086} & 0.087 & 0.099 & 0.088 & 0.090 & 0.101 & 0.103 & 0.100
         \\
         \hline
         MBA $C_\eta$ & \textbf{0.157} & 0.162 & 0.210 & 0.173 & 0.162 & 0.211 & 0.219 & 0.216
         \\
         \hline
         %\\
         \hline
         Recomputed $C_\rho$ & \textbf{0.107} & 0.126 & 0.320 & 0.130 & 0.132 & 0.331 & 0.337 & 0.309
         \\
         \hline
         Recomputed $C_\eta$ & \textbf{0.188} & 0.205 & 0.599 & 0.221 & 0.205 & 0.608 & 0.616 & 0.586
         \\
         \hline
    \end{tabular}
    }
\end{table*}

\begin{table*}[!htb]
    \caption{Extended version of Table \ref{tab:cost_results} to include Community-Degree. Results for initial attacks on the PA-Road Network and the Canadian-Election Twitter Network. Methods introduced in this work are on the left of the double column. The best results are emboldened.}
    \label{tab:cost_results_extended}
    \centering
    \tiny
    \resizebox{\textwidth}{!}{%
    \begin{tabular}{|l|c|c|c||c|c|c|c|c|}
         \hline
         \textbf{Network} & MV & AMV & CD & AMC-D & Mas & CHB & WMC-D & Deg  \\
         \hline \hline
         PA-Roads $C_\rho$ & \textbf{0.013} & 0.016 & 0.126 & 0.015 & 0.167 & 0.162 & 0.120 & 0.122
         \\
         \hline
         PA-Roads $C_\eta$ & \textbf{0.021} & 0.026 & 0.264 & 0.026 & 0.295 & 0.281 & 0.262 & 0.305
         \\
         \hline
         %\\
         \hline
         Election $C_\rho$ & 0.430 & 0.032 & 0.023 & \textbf{0.022} & 0.067 & 0.029 & 0.023 & \textbf{0.022}
         \\
         \hline
         Election $C_\eta$ & \textbf{0.636} & 0.673 & 0.661 & 0.656 & 0.732 & 0.667 & 0.654 & 0.651
         \\
         \hline
    \end{tabular}
    }
\end{table*}

\begin{table*}[!hbt]
    \caption{Results for attacks on the generated scale free networks. The values shown are the average over 100 simulations. Methods introduced in this work are on the left of the double column. The best results by method are emboldened. The best results overall are marked with a star.}
    \label{tab:generated_results_sf}
    \centering
    \resizebox{\textwidth}{!}{%
    \begin{tabular}{|l|c|c|c|c||c|c|c|c|}
         \hline
         \textbf{Method} & MV & AMV & CD & AMC-D & Mas & CHB & MC-D & Deg  \\
         \hline \hline
         Initial $C_\rho$ & 0.483 & 0.424 & 0.263 & 0.256 & 0.337 & 0.361 & 0.254 & \textbf{0.243}
         \\
         \hline
         Initial $C_\eta$ & $\textbf{0.834}^*$ & 0.856 & 0.882 & 0.881 & 0.884 & 0.879 & 0.881 & 0.880
         \\
         \hline
         %\\
         \hline
         MBA $C_\rho$ & 0.430 & 0.364 & 0.243 & 0.239 & 0.273 & 0.292 & 0.242 & \textbf{0.235}
         \\
         \hline
         MBA $C_\eta$ & \textbf{0.839} & 0.859 & 0.880 & 0.880 & 0.881 & 0.877 & 0.880 & 0.880
         \\
         \hline
         %\\
         \hline
         Recomputed $C_\rho$ & 0.296 & 0.305 & 0.224 & 0.227 & 0.256 & 0.258 & $\textbf{0.223}^*$ & $\textbf{0.223}^*$
         \\
         \hline
         Recomputed $C_\eta$ & \textbf{0.878} & \textbf{0.878} & 0.880 & 0.880 & 0.882 & 0.881 & 0.880 & 0.880
         \\
         \hline
    \end{tabular}
    }
\end{table*}

\begin{table*}[!hbt]
    \caption{Results for attacks on the generated Erd\H{o}s-R\'enyi networks. The values shown are the average over 100 simulations. Methods introduced in this work are on the left of the double column. The best results by method are emboldened. The best results overall are marked with a star.}
    \label{tab:generated_results_er}
    \centering
    \resizebox{\textwidth}{!}{%
    \begin{tabular}{|l|c|c|c||c|c|c|c|c|}
         \hline
         \textbf{Method} & MV & AMV & CD & AMC-D & Mas & CHB & WMC-D & Deg  \\
         \hline \hline
         Initial $C_\rho$ & 0.493 & 0.491 & 0.479 & 0.475 & 0.487 & 0.492 & 0.473 & \textbf{0.472}
         \\
         \hline
         Initial $C_\eta$ & 0.683 & 0.681 & 0.715 & 0.723 & 0.705 & $\textbf{0.675}^*$ & 0.724 & 0.728
         \\
         \hline
         %\\
         \hline
         MBA $C_\rho$ & 0.483 & 0.484 & 0.469 & 0.466 & 0.475 & 0.485 & 0.464 & \textbf{0.462}
         \\
         \hline
         MBA $C_\eta$ & 0.683 & 0.681 & 0.714 & 0.722 & 0.705 & $\textbf{0.675}^*$ & 0.724 & 0.727
         \\
         \hline
         %\\
         \hline
         Recomputed $C_\rho$ & 0.461 & 0.482 & $\textbf{0.429}^*$ & 0.454 & 0.452 & 0.446 & 0.430 & 0.430
         \\
         \hline
         Recomputed $C_\eta$ & 0.700 & \textbf{0.681} & 0.739 & 0.739 & 0.729 & 0.718 & 0.738 & 0.740
         \\
         \hline
    \end{tabular}
    }
\end{table*}

% that's all folks
\end{document}